\newtheorem{theorem}{Theorem}[section]
\theoremstyle{plain}
\newtheorem{case}{Case}
\newtheorem{corollary}{Corollary}
\newtheorem{proposition}{Proposition}
\newtheorem{remark}{Remark}
\numberwithin{equation}{section}
\begin{document}
\title[Confidence bands ]{Asymptotic confidence bands for copulas based on the local linear kernel estimator}
\author{Diam Ba}
\address[Diam Ba]{LERSTAD, Universit\'e Gaston Berger, Saint-louis\\
S\'en\'egal}
\email[D. Ba]{diamba79@gmail.com}
\author{Cheikh Tidiane Seck}
\address[Cheikh Tidiane Seck]{LERSTAD, Universit\'e Alioune Diop, Bambey \\
S\'en\'egal}
\email[C. T. Seck]{cheikhtidiane.seck@uadb.edu.sn}
\author{Gane Samb Lo}
\address[Gane Samb Lo]{LERSTAD, Universit\'e Gaston Berger, Saint-louis\\
S\'en\'egal. LSTA, Universit\'e Pierre et Marie Curie, Paris, France}
\email[G. S. Lo]{pcsuvi@univi.org}

\date{\today}
\subjclass[2000]{Primary 62G05, 62G07; Secondary 60F12, 62G20}
\keywords{Copula function, Kernel estimation, Local linear, Uniform in bandwidth
consistency, Confidence bands.}

\begin{abstract}
In this paper we establish asymptotic simultaneous confidence bands for copulas based on the local linear kernel estimator proposed by Chen and Huang  \cite{r1}. For this, we prove under smoothness conditions on the copula function, a uniform in bandwidth law of the iterated logarithm for the maximal deviation of this estimator from its expectation. We also show that the bias term converges uniformly to zero with a precise rate. The performance of these bands is illustrated in a simulation study. An application based on pseudo-panel data is also provided for modeling dependence.
\end{abstract}

\maketitle
  \section{Introduction}

Let us consider a random vector $\left(X,Y\right)$ with joint cumulative distribution function $H$ and marginal distribution functions $F$ and $G$. The Sklar's theorem (see \cite{r2}) says that there exists a bivariate distribution function $C$ on $[0,1]^2$  with uniform margins such that
 \begin{center}
$H(x,y)=C(F(x),G(y)).$
\end{center}
The function $C$ is called a copula associated with the random vector $(X,Y)$. If the marginal distribution functions $F$ and $G$ of $H$ are continuous, then the copula $C$ is unique and is defined as
$$C(u,v)=H(F^{\leftarrow}(u),G^{\leftarrow}(v)),$$ where $F^{\leftarrow}(u)=\inf\{x: F(x)\geq u\}$ and $G^{\leftarrow}(v)=\inf\{y: G(y)\geq v\}$, $\,\,u,v\in [0,1]$, are the generalized inverses of $F$ and $G$ respectively. \\

From these facts, estimating bivariate distribution function can be achieved in two steps : (i) estimating the margins $F$ and $G$; (ii) estimating the copula $C$.\\

 In this paper, we are dealing with nonparametric copula estimation. We consider a copula function $C$ with uniform margins $U$ and $V$ defined on $[0,1]$. Then, we can write 
$$ C(u,v)=\mathbb{P}(U\leq u, V\leq v),\;\; u,v\in[0,1].$$

The aim of this paper is to construct asymptotic optimal confidence bands, for the copula $C$, from the local linear kernel estimator proposed by Chen and Huang \cite{r1}. Our approach, based on  modern functional theory of empirical processes, allows the use of data-driven bandwidths for this estimator, and is largely inspired by the works of Mason \cite{r3} and Deheuvels and Mason \cite{r14}.\\
\indent There are two main methods for estimating copula functions : parametric and nonparametric methods. The Maximum likelihood estimation method (MLE) and the moment method are popular parametric approaches. It happens that one may use a nonparametric approach like the MLE-method and, at the same time, estimates margins by using parametric methods. Such an approach is called a  semi-parametric estimation method (see \cite{r4}). A popular nonparametric method is the kernel smoothing. Scaillet and Fermanian \cite{r5} presented the kernel smoothing method to estimate bivariate copulas for time series. 
Genest and Rivest \cite{r6} gave a nonparametric empirical distribution method to estimate bivariate Archimedean copulas.\\
 \indent A pure nonparametric estimation of copulas treats both the copula and the margins in a parameter-free way and thus offers the greatest generality. Nonparametric estimation of copulas goes back to Deheuvels \cite{r7} who proposed  an estimator based on a multivariate empirical distribution function and on its empirical marginals. Weak convergence studies of this estimator can be found in Fermanian et al. \cite{r8}. Gijbels and Mielniczuk \cite{r9} proposed  a kernel estimator for a bivariate copula density. Another approach of kernel estimation is to directly estimate a copula function  as explored in \cite{r5}. Chen and Huang \cite{r1} proposed a new bivariate kernel copula estimator by using local linear kernels and a simple mathematical correction that removes the boundary bias. They also derived the bias and the variance of their estimator, which reveal that the kernel smoothing produces a second order reduction in both the variance and mean square error as compared with the unsmoothed empirical estimator of Deheuvels \cite{r7}.\\
\indent Omelka, Gijbels and Veraverbeke \cite{r10} proposed improved \textit{shrinked} versions of the estimators of Gijbels and Mielniczuk \cite{r9} and Chen and Huang \cite{r1}. They have done this shrinkage by including a weight function that removes the corner bias problem. They also established weak convergence for all newly-proposed estimators.\\

In parallel, powerful technologies have been developed for density  and distribution function kernel estimation. We refer to Mason \cite{r3}, Dony \cite{r11}, Dony and Einmahl \cite{r12}, Einmahl and Mason \cite{r13}, Deheuvels and Mason \cite{r14}. 
In this paper we'll apply these recent methods to kernel-type estimators of copulas. The existence of kernel-type function estimators should lead to nonparametric estimation by confidence bands, as shown in \cite{r14}, where general asymptotic simultaneous confidence bands  are established for the density and the regression function curves. 
Furthermore, to our knowledge, there are not yet such type of results in nonparametric estimation of copulas. 
This motivated us to extend such technologies to kernel estimation of copulas by providing asymptotic simultaneous optimal confidence bands.  


Let $(X_1,Y_1),...,(X_n,Y_n)$ be an independent and identically distributed sample of the bivariate random vector $(X,Y)$, with continuous marginal cumulative distribution function $F$ and $G$. To construct their estimator, Chen and Huang proceed in two steps. In the first step, they estimate margins by
\begin{center} 
$\displaystyle \hat{F}_n(x)=\frac{1}{n}\sum_{i=1}^n K\left(\frac{x-X_i}{b_{n1}}\right),\ \ \ \ \ \hat{G}_n(y)=\frac{1}{n}\sum_{i=1}^n K\left(\frac{y-Y_i}{b_{n2}}\right),$
\end{center}  where $b_{n1}$ and $b_{n2}$ are some bandwidths and $K$ is the integral of a symmetric bounded kernel function $k$ supported on $[-1,1]$. In the second step, the pseudo-observations $\hat{U}_i = \hat{F}_n(X_i)$ and $\hat{V}_i =\hat{G}_n(Y_i)$ are used to estimate the joint distribution function of the unobserved $F(X_i)$ and $G(Y_i)$, which gives the estimate of the unknown copula $C$. To prevent boundary bias, Chen and Huang suggested using a local linear version of the kernel $k$ given by
$$
 k_{u,h}(t)=\frac{k(t)\{a_2(u,h)-a_1(u,h)t\}}{a_0(u,h)a_2(u,h)-a_1^2(u,h)}\mathbb{I}\left\{\frac{u-1}{h}<t<\frac{u}{h}\right\},$$
where $ a_j(u,h)=\int_{(u-1)/h}^{u/h} t^j k(t)dt$ for $j=0,1,2$ ; $u\in[0,1]$ and $0<h<1$ is a bandwidth. Finally, the local linear kernel estimator of the copula $C$ is defined as 
\begin{equation}\label{e1}
\hat{C}_{n,h}^{(LL)}(u,v)=\frac{1}{n}\sum_{i=1}^{n}K_{u,h}\left(\frac{u-\hat{U_i}}{h} \right)K_{v,h}\left(\frac{v-\hat{V_i}}{h} \right),
\end{equation}
where  $K_{u,h}(x)=\int_{-\infty}^{x}k_{u,h}(t)dt.$ The subscript $h$ in $\hat{C}_{n,h}^{(LL)}(u,v)$ is a variable bandwidth which may depend either on the sample data or the location $(u,v)$.\\

Our best achievement is the construction of asymptotic confidence bands from a uniform in bandwidth law of the iterated logarithm (LIL) for the maximal deviation of the  local linear estimator \eqref{e1},  and  the uniform convergence of the bias to zero with the same speed of convergence.


The paper is organized as follows. In Section $2$, we expose our  main results  in Theorems \ref{t1}, \ref{t2} and \ref{t3}. Simulation studies and  applications to real data sets are also made in this section  to illustrate  these results. In Section $3$, we report the proofs of  our assertions. The paper is ended by appendices in which we postpone some technical results and numerical computations.

\section{Main results and applications} 

\subsection{Results}

Here, we state our theoretical results. Theorem \ref{t1} gives a uniform in bandwidth LIL for the maximal deviation of the estimator \eqref{e1}. Theorem \ref{t2}  handles the  bias, while Theorem \ref{t3} provides  asymptotic optimal simultaneous  confidence bands for the copula function $C(u,v)$. 
\begin{theorem}\label{t1}
Suppose that the copula function $C(u,v)$ has bounded first order partial derivatives on $[0, 1]^2$. Then for any sequence of positive constants $(b_n)_{n\geq 1}$ satisfying $0<b_n<1, b_n\rightarrow 0$, $b_n\geq (\log n)^{-1}$, and for some $c>0$, we have almost surely
\begin{equation}\label{e2}
\limsup_{n\rightarrow\infty}\left\{R_n\sup_{\frac{c\log n}{n}\le h\le b_n}\sup_{(u,v)\in[0,1]^2}\left|\hat{C}_{n,h}^{(LL)}(u,v)-\mathbb{E}\hat{C}_{n,h}^{(LL)}(u,v)\right|\right\}=A(c),
\end{equation}
where $A(c)$ is a positive constant such that $0<A(c)\leq 3$ and $R_n =\left(\frac{n}{2\log\log n}\right)^{1/2}$.
\end{theorem}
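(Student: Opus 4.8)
The plan is to recognize the supremum in \eqref{e2} as the uniform-in-bandwidth deviation of an empirical-process-indexed quantity and to invoke the general machinery of Deheuvels--Mason \cite{r14} and Einmahl--Mason \cite{r13}. First I would reduce to the case of known margins. The estimator $\hat C^{(LL)}_{n,h}$ is built from the pseudo-observations $\hat U_i,\hat V_i$, which come from the smoothed marginals $\hat F_n,\hat G_n$; replacing $\hat U_i$ by $U_i=F(X_i)$ and $\hat V_i$ by $V_i=G(Y_i)$ produces an error that I would control using the boundedness of the first order partial derivatives of $C$ (the hypothesis of the theorem), the Lipschitz property of $K_{u,h}$, and the uniform rate of $\sup_x|\hat F_n(x)-F(x)|$. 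One must check that this substitution error, after multiplication by $R_n=(n/2\log\log n)^{1/2}$ and after taking $\sup_h$ over $c\log n/n\le h\le b_n$, is almost surely negligible; this is where the lower bound $h\ge c\log n/n$ and the smoothness assumption are used.

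Next I would write the idealized estimator $\tilde C_{n,h}(u,v)=\frac1n\sum_{i=1}^n K_{u,h}\!\left(\frac{u-U_i}{h}\right)K_{v,h}\!\left(\frac{v-V_i}{h}\right)$ and center it: $\tilde C_{n,h}(u,v)-\mathbb E\tilde C_{n,h}(u,v)=n^{-1}\sum_{i=1}^n\bigl(g_{u,v,h}(U_i,V_i)-\mathbb E g_{u,v,h}(U_i,V_i)\bigr)$ where $g_{u,v,h}(s,t)=K_{u,h}\!\left(\frac{u-s}{h}\right)K_{v,h}\!\left(\frac{v-t}{h}\right)$. The key point is that the class of functions $\mathcal G=\{g_{u,v,h}:(u,v)\in[0,1]^2,\ 0<h<1\}$ is a bounded, uniformly VC-type (pointwise-measurable) class: each $K_{u,h}$ is a uniformly bounded monotone-type function of a single variable indexed by $(u,h)$, so products of two such functions form a VC class by the standard stability properties of VC classes under products and composition with monotone maps. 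Then I would apply the uniform-in-bandwidth LIL for such classes (Theorem of Einmahl--Mason \cite{r13}, as used in \cite{r14}) to the local empirical process $\sqrt{n}(\mathbb P_n-\mathbb P)g$, $g\in\mathcal G$, which yields
\begin{equation*}
\limsup_{n\to\infty}\ R_n\sup_{c\log n/n\le h\le b_n}\ \sup_{(u,v)\in[0,1]^2}\bigl|\tilde C_{n,h}(u,v)-\mathbb E\tilde C_{n,h}(u,v)\bigr|=A(c)
\end{equation*}
almost surely, with $A(c)$ a finite positive constant governed by the envelope and the (uniform) $L^2$-size of $\mathcal G$; the bound $A(c)\le 3$ follows from tracking the constant, using that $\int k_{u,h}=1$ so $0\le K_{u,h}\le$ a universal constant and the product envelope is at most (that constant)$^2$, together with the classical LIL normalization.

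The main obstacle I anticipate is twofold: (i) verifying that $\{g_{u,v,h}\}$ genuinely satisfies the VC-type entropy bound \emph{uniformly in $h$} — one must handle the $h$-dependence inside the indicator $\mathbb I\{(u-1)/h<t<u/h\}$ in the definition of $k_{u,h}$ and in the normalizing denominator $a_0a_2-a_1^2$, showing these do not destroy the VC property and that the denominators stay bounded away from $0$ for all $(u,h)$; and (ii) making the margin-replacement step in the first paragraph rigorous with the multiplicative factor $R_n$ present, which forces a fairly delicate comparison since $R_n$ is of order $\sqrt{n/\log\log n}$ and the bandwidth can be as small as $c\log n/n$. Once these two points are settled, the statement follows by combining the idealized LIL with the negligibility of the replacement error via the triangle inequality. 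I would present (i) as a lemma (possibly deferred to the appendix) and (ii) as a short sequence of estimates using the Chen--Huang bias/variance expansions and uniform consistency of $\hat F_n,\hat G_n$.
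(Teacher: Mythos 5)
There is a genuine gap, and it sits in both of the two pillars of your plan. \textbf{(a) The margin-replacement step fails.} The error from replacing $\hat U_i=\hat F_n(X_i)$ by $U_i=F(X_i)$ is \emph{not} negligible on the $R_n$ scale. Your proposed bound — Lipschitz constant of $x\mapsto K_{u,h}\left(\frac{u-x}{h}\right)$, which is of order $1/h$, times $\sup_x|\hat F_n(x)-F(x)|=O(\sqrt{\log\log n/n})$ a.s. — leaves a contribution of order $1/h$ after multiplication by $R_n$, which diverges for $h$ as small as $c\log n/n$ (and even for $h=b_n$). A sharper analysis does not save the step: on the $R_n$ scale the replacement error converges to the partial-derivative correction terms $-\partial_uC(u,v)\,\mathbb{K}_C(u,1,\cdot)-\partial_vC(u,v)\,\mathbb{K}_C(1,v,\cdot)$ familiar from empirical copula theory, i.e.\ it is of exactly the same order as the main term. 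This is precisely why the paper does not reduce to known margins: it compares $\sqrt n\,\hat D_{n,h}$ to the empirical copula process and extracts the constant from Wichura's multiparameter LIL applied to the \emph{corrected} Kiefer process $\mathbb{K}^\ast_C(u,v,n)=\mathbb{K}_C(u,v,n)-\mathbb{K}_C(u,1,n)\partial_uC-\mathbb{K}_C(1,v,n)\partial_vC$, after a strong approximation of $\mathbb{C}_n$ at rate $O(n^{3/8}(\log n)^{3/2})$; the bound $A(c)\le 3$ comes from that corrected process, not from an envelope bound on the smoothed class. \textbf{(b) The wrong class is fed to the uniform-in-bandwidth machinery.} The Einmahl--Mason/Mason results you invoke require the local second-moment condition $\sup_g\mathbb{E}[g^2]\le C'h$, which your class $\{K_{u,h}((u-s)/h)K_{v,h}((v-t)/h)\}$ violates: its variance is $\approx C(u,v)(1-C(u,v))$, of order one, uniformly in $h$. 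For such a non-local class only the ordinary compact LIL applies, and it identifies the limsup with $\sup_g\sigma(g)$, not with a bandwidth-dependent constant $A(c)$.

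The paper's actual architecture repairs both points at once: it applies Mason's theorem to the \emph{difference} class $g(s,t,h)=K_{u,h}(\cdot)K_{v,h}(\cdot)-\mathbb{I}\{s\le u,t\le v\}$ (with the margin transformations $\zeta_1,\zeta_2$ absorbed into the class), for which $\mathbb{E}[g^2]\le C'h$ does hold by a first-order Taylor bound using the bounded first partial derivatives of $C$. This is the content of Proposition \ref{p1} and Corollary \ref{crl1}, yielding $\sup_{c\log n/n\le h\le b_n}\sup_{u,v}|\sqrt n\,\hat D_{n,h}(u,v)-\tilde{\mathbb{C}}_n(u,v)|=O(\sqrt{b_n\log\log n})=o(\sqrt{\log\log n})$, after which the Kiefer-process LIL for the copula process supplies the constant. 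Your point (i) — verifying the VC/entropy property uniformly in $h$, including the indicator $\mathbb{I}\{(u-1)/h<t<u/h\}$ and the normalizing denominator in $k_{u,h}$ — is well taken and corresponds to the paper's Appendix A, but as structured your argument does not go through: you must subtract the indicator before invoking the uniform-in-bandwidth theorem, and you cannot discard the effect of the estimated margins.
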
 
\begin{remark}
 Theorem \ref{t1} represents a uniform in bandwidth law of the iterated logarithm for the maximal deviation of the estimator (\ref{e1}). As in \cite{r14} we may use it, in its probability version, to construct simultaneous asymptotic confidence bands from the estimator \eqref{e1}. In this purpose, we must ensure before hand that the bias term $B_{n,h}(u,v)=\mathbb{E}\hat{C}_{n,h}^{(LL)}(u,v) - C(u,v)$ converges uniformly to $0$,  with the same rate  $R_n$, as $n\rightarrow\infty$. But this requires that the  copula function $C(u,v)$ admits bounded second-order partial derivatives on the unit square $[0,1]^2$.
\end{remark}

\begin{theorem} \label{t2}
Suppose that the copula function  $C(u,v)$ admits bounded second-order partial  derivatives on $[0, 1]^2$. Then for any sequence of positive constants $(b_n)_{n\geq 1}$ satisfying $0<b_n<1$, $\sqrt{n}b_n^2/\sqrt{\log\log n}=o(1),$ and for some $c>0$, we have almost surely,
\begin{equation} \label{biais}
R_n\sup_{\frac{c\log n}{n}\le h\le b_n}\sup_{(u,v)\in[0,1]^2}\vert \mathbb{E}\hat{C}_{n,h}^{(LL)}(u,v) - C(u,v)\vert \rightarrow 0,\,\, n\rightarrow\infty.
\end{equation}
\end{theorem}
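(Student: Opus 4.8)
The plan is to split the bias as
\[
B_{n,h}(u,v)=\Big[\mathbb{E}\hat C_{n,h}^{(LL)}(u,v)-\widetilde C_{n,h}(u,v)\Big]+\Big[\widetilde C_{n,h}(u,v)-C(u,v)\Big],
\]
where $\widetilde C_{n,h}$ is the ``oracle'' quantity obtained on replacing the pseudo-observations $\hat U_i=\hat F_n(X_i),\,\hat V_i=\hat G_n(Y_i)$ by the true uniforms $U_i=F(X_i),\,V_i=G(Y_i)$, that is,
\[
\widetilde C_{n,h}(u,v)=\mathbb{E}\Big[K_{u,h}\big(\tfrac{u-U_1}{h}\big)K_{v,h}\big(\tfrac{v-V_1}{h}\big)\Big]=\int_0^1\!\!\int_0^1K_{u,h}\big(\tfrac{u-s}{h}\big)K_{v,h}\big(\tfrac{v-t}{h}\big)\,dC(s,t).
\]
I would first prove the uniform smoothing-bias bound $\sup_{(u,v)\in[0,1]^2}|\widetilde C_{n,h}(u,v)-C(u,v)|\le Mh^2$, with $M$ not depending on $u,v,h,n$. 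Granting this, $h\le b_n<1$ gives $h^2\le b_n^2$, hence $\sup_{c\log n/n\le h\le b_n}\sup_{(u,v)}|\widetilde C_{n,h}-C|\le Mb_n^2$, and then $R_n\cdot Mb_n^2=(M/\sqrt2)\,\sqrt n\,b_n^2/\sqrt{\log\log n}\to0$ by the hypothesis. So everything reduces to establishing that $O(h^2)$ bound and to showing that the pseudo-observation discrepancy is $o(R_n^{-1})$, uniformly in $(u,v)$ and in $h$ on the stated range.

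For the smoothing-bias bound, the key step is an integration by parts in each of the two variables of the Stieltjes integral above. Since $k_{u,h}$ is supported in $\big((u-1)/h,\,u/h\big)$ one has $K_{u,h}\big((u-1)/h\big)=0$ and $K_{u,h}\big(u/h\big)=\int k_{u,h}=1$, while any copula satisfies $C(0,t)=C(s,0)=0$, $C(1,t)=t$, $C(s,1)=s$; consequently every boundary term cancels, and after the substitution $p=(u-s)/h,\,q=(v-t)/h$ one gets
\[
\widetilde C_{n,h}(u,v)=\int\!\!\int k_{u,h}(p)\,k_{v,h}(q)\,C(u-hp,\,v-hq)\,dp\,dq,
\]
the point $(u-hp,v-hq)$ remaining in $[0,1]^2$ throughout the support. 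Straight from the definition of $k_{u,h}$ one verifies the local-linear moment identities $\int k_{u,h}(p)\,dp=1$ and $\int p\,k_{u,h}(p)\,dp=0$, which are exactly what the numerator factor $a_2-a_1t$ and the correction $a_0a_2-a_1^2$ are built to produce. A second-order Taylor expansion of $C(u-hp,v-hq)$ about $(u,v)$ then recovers $C(u,v)$ from the zeroth-order term, annihilates both first-order terms by those identities, and leaves a remainder bounded by $\tfrac{h^2}{2}$ times the supremum over $[0,1]^2$ of the second-order partial derivatives of $C$ (finite by hypothesis) times integrals $\int|p|^j|k_{u,h}(p)|\,dp$ with $j\le2$. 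As $\mathrm{supp}\,k_{u,h}\subset[-1,1]$ and $\inf_{u\in[0,1],\,0<h<1}\big(a_0(u,h)a_2(u,h)-a_1^2(u,h)\big)>0$ --- the standard uniform nondegeneracy of local linear kernels used already in \cite{r1} --- these integrals are bounded by a universal constant, whence $|\widetilde C_{n,h}(u,v)-C(u,v)|\le Mh^2$.

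The step I expect to be the main obstacle is the pseudo-observation discrepancy $\mathbb{E}\hat C_{n,h}^{(LL)}(u,v)-\widetilde C_{n,h}(u,v)$. The naive bound from $\big|K_{u,h}(\tfrac{u-a}{h})-K_{u,h}(\tfrac{u-b}{h})\big|\le\|k_{u,h}\|_\infty|a-b|/h$ introduces $\sup_x|\hat F_n(x)-F(x)|$ and $\sup_y|\hat G_n(y)-G(y)|$, but it is far too crude once $h$ is of order $c\log n/n$; one must instead expand $K_{u,h}$ to second order about the true argument and exploit that, conditionally on $(X_1,Y_1)$, the perturbation $\hat F_n(X_1)-F(X_1)$ is centered up to a bias of order $b_{n1}^2$ (plus the harmless self-inclusion term $n^{-1}K(0)$), so that, once the expectation is taken, the discrepancy is driven by the marginal-smoothing bias rather than by the $n^{-1/2}$ stochastic fluctuation of $\hat F_n$. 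This reduces the discrepancy to $O(b_{n1}^2+b_{n2}^2+n^{-1})$, which is $o(R_n^{-1})$ as soon as $\sqrt n(b_{n1}^2+b_{n2}^2)/\sqrt{\log\log n}\to0$, the natural analogue for the margin bandwidths of the hypothesis on $b_n$. Combining this with the $O(h^2)$ smoothing bias and $R_nb_n^2\to0$ then yields \eqref{biais}.
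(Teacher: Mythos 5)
Your treatment of the dominant term coincides with the paper's: the paper likewise represents $\mathbb{E}\hat C_{n,h}^{(LL)}(u,v)$ as $\int_{-1}^{1}\int_{-1}^{1}C(\cdot,\cdot)\,k_{u,h}(s)k_{v,h}(t)\,ds\,dt$, performs a second-order Taylor expansion, kills the first-order terms with the identities $\int s\,k_{u,h}(s)\,ds=\int t\,k_{v,h}(t)\,dt=0$ (which it calls ``symmetry'' of the kernels but which are exactly your local-linear moment identities), bounds the remainder by the assumed bound on the second-order partials to get $O(h^2)=O(b_n^2)$ uniformly, and concludes via $R_nb_n^2=\sqrt{n}\,b_n^2/\sqrt{2\log\log n}\to 0$. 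Where you genuinely differ is in the pseudo-observation discrepancy. The paper does not isolate it as a separate term: it writes the expectation as $\int\int C\bigl(\zeta_{1,n}^{-1}(u-sh),\zeta_{2,n}^{-1}(v-th)\bigr)k_{u,h}(s)k_{v,h}(t)\,ds\,dt$ with $\zeta_{1,n}^{-1}=F\circ\hat F_n^{-1}$, and then simply asserts, ``by continuity of $F$ and $G$,'' that $\zeta_{1,n}^{-1}(u-sh)\sim u-sh$ for large $n$, after which it argues as if the observations were the true uniforms. Your decomposition makes explicit what that one line hides, and your diagnosis that a Lipschitz bound in $\sup_x|\hat F_n(x)-F(x)|$ is too crude on the whole range of $h$ is correct.

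That said, your sketch of the discrepancy term is not yet a proof, and the obstacle you flag is real at the bottom of the bandwidth range. After taking expectations, the first-order term of your expansion of $K_{u,h}$ is indeed $O(b_{n1}^2+n^{-1})$ (at the price of an additional hypothesis $\sqrt{n}\,b_{nj}^2/\sqrt{\log\log n}\to 0$ on the margin bandwidths, which does not appear in the theorem, and of smoothness of $F$ and $G$ beyond continuity), but the second-order remainder involves $\mathbb{E}(\hat U_1-U_1)^2\asymp n^{-1}$ divided by $h^2$ and localized to an $O(h)$-neighbourhood, i.e.\ a contribution of order at least $n^{-1}/h$; at $h\asymp (\log n)/n$ this is $1/\log n$, which is not $o(R_n^{-1})=o\bigl(\sqrt{\log\log n/n}\bigr)$. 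So either the lower end of the bandwidth range must be raised for the bias statement, or the discrepancy must be controlled by a different argument (essentially, a bound on the bias of the empirical copula with estimated margins). This is a gap in your proposal --- but it is precisely the gap the paper itself papers over with its ``$\sim$'' step, so on this point your write-up is more, not less, honest than the original.
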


\noindent \textbf{Useful comment.}
Because a number of copula families do not possess bounded second-order partial derivatives, the application of these results is limited by a corner bias problem. To overcome this difficulty and apply these results to a wide family of copulas, we adopt the shrinkage method of Omelka \textit{et al.} \cite{r10}, by taking a local data-driven bandwidth $\hat{h}_n(u, v)$ satisfying  the  following condition :  
$$
(H_1)\ \ \forall \epsilon >0,\ \mathbb{P}\left(\sup_{(u,v)\in[0,1]^2}\left|\frac{\hat{h}_n(u, v)}{h_n}-b(u, v)\right|>\epsilon\right)\longrightarrow 0,\ n\rightarrow \infty,
$$
where $h_n$ is a sequence of positive constants converging to $0$, and $b(u, v)$ is a real-valued function defined by
\begin{equation}\label{poi}
b(u,v)=\max\left\{\min\{u^\alpha,(1-u)^\alpha\},\min\{v^\alpha,(1-v)^\alpha\}\right\},\ \alpha>0.
\end{equation}
For such a bandwidth $\hat{h}_n(u, v)$, the local linear kernel estimator can be rewritten as
\begin{equation}\label{e3}
\hat{C}_{n,\hat{h}_n(u,v)}^{(LL)}(u,v)=\frac{1}{n}\sum_{i=1}^{n}K_{u,\hat{h}_n(u,1)}\left(\frac{u-\hat{U_i}}{\hat{h}_n(u,1)} \right)K_{v,\hat{h}_n(1,v)}\left(\frac{v-\hat{V_i}}{\hat{h}_n(1,v)} \right).
\end{equation}
By condition $(H_1)$,  \eqref{e3} is equivalent for $n$ large enough  to
\begin{equation}\label{e4}
\hat{C}_{n,{h}_n}^{(LL)}(u,v)=\frac{1}{n}\sum_{i=1}^{n}K_{u,{h}_n}\left(\frac{u-\hat{U_i}}{{h}_n b(u,1)} \right)K_{v,{h}_n}\left(\frac{v-\hat{V_i}}{{h}_n b(1,v)} \right).
\end{equation}
This latter estimator \eqref{e4} is exactly the improved "shrinked" version proposed by Omelka et al. \cite{r10}. It enables us to keep the bias bounded on the borders of the unit square and then  to remove the problem of possible unboundedness of the second order partial derivatives of the copula function $C$. To set up asymptotic optimal simultaneous confidence bands for the copula $C$, we  need the following additional condition :
$$
(H_2)\ \  \ \mathbb{P}\left(\frac{c\log n}{n}\le \hat{h}_n(u,v)\le b_n, \;\forall\; 0\le u,v\le 1,\right)\longrightarrow 1,\ n\rightarrow \infty.
$$

If  conditions $(H_1)$ and $(H_2)$ hold,  
then we can infer from Theorem \ref{t1} that
$$
 R_n\sup_{0\leq u,v\leq 1}\left|\hat{C}_{n,\hat{h}_n(u,v)}^{(LL)}(u,v)-\mathbb{E}\hat{C}_{n,\hat{h}_n(u,v)}^{(LL)}(u,v)\right|\stackrel{\mathbb{P}}{\longrightarrow}A(c),\ \ \ n\rightarrow \infty.
$$
This is still equivalent to
\begin{equation} \label{e5}
\sup_{0\leq u,v\leq 1}\frac{R_n}{A(c)}\left|\hat{C}_{n,\hat{h}_n(u,v)}^{(LL)}(u,v)-\mathbb{E}\hat{C}_{n,\hat{h}_n(u,v)}^{(LL)}(u,v)\right|\stackrel{\mathbb{P}}{\longrightarrow}1,\ \ \ n\rightarrow \infty.
\end{equation}

To make use of \eqref{e5} for forming confidence bands, we must ensure that the bandwidth $\hat{h}_n(u,v)$ is chosen in such a way that the bias of  the estimator (\ref{e3}) may be neglected , in the sense that
\begin{equation}\label{e6}
\sup_{0\leq u,v\leq 1}\frac{R_n}{A(c)}\left|\mathbb{E}\hat{C}_{n,\hat{h}_n(u,v)}^{(LL)}(u,v)-C(u,v)\right|\stackrel{\mathbb{P}}{\longrightarrow}0,\ \ \ n\rightarrow \infty.
\end{equation}
This would be the case if condition $(H2)$ holds and $\sqrt{n}b_n^2/\sqrt{\log\log n}=o(1)$. 

\begin{theorem}\label{t3}
Suppose that the assumptions of Theorem \ref{t1} and Theorem \ref{t2} hold. 
Then for any local data-driven bandwidth $\hat{h}_n(u,v)$ satisfying $(H_1)$ and $(H_2)$, and any  $\epsilon>0$, one has, as $n\rightarrow\infty$,
\begin{equation}\label{leq1}
\mathbb{P}\bigg(C(u,v)\in \left[\hat{C}_{n,\hat{h}_n(u,v)}^{(LL)}(u,v)-E_n(\epsilon),\hat{C}_{n,\hat{h}_n(u,v)}^{(LL)}(u,v)+E_n(\epsilon)\right], \forall\:0\leq u,v\leq 1 \bigg)\longrightarrow 1
\end{equation}
and,
\begin{equation}\label{leq2}
\mathbb{P}\bigg(C(u,v)\in \left[\hat{C}_{n,\hat{h}_n(u,v)}^{(LL)}(u,v)-\Delta_n(\epsilon),\hat{C}_{n,\hat{h}_n(u,v)}^{(LL)}(u,v)+\Delta_n(\epsilon)\right], \forall\:0\leq u,v\leq 1 \bigg)\longrightarrow 0,
\end{equation}
where $E_n(\epsilon)=(1+\epsilon)\frac{A(c)}{R_n}$, $\Delta_n(\epsilon)=(1-\epsilon)\frac{A(c)}{R_n}$, and $R_n =\left(\frac{n}{2\log\log n}\right)^{1/2}$.
\end{theorem}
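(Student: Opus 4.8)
The plan is to derive \eqref{leq1} and \eqref{leq2} directly from the two limits \eqref{e5} and \eqref{e6} established above, so that the entire content of the theorem is carried by Theorems \ref{t1} and \ref{t2}. I first recall why \eqref{e5} and \eqref{e6} hold. On the event in $(H_2)$, whose probability tends to $1$, every value $\hat{h}_n(u,v)$ lies in $[c\log n/n,\,b_n]$, hence, pointwise in $(u,v)$, $|\hat{C}_{n,\hat{h}_n(u,v)}^{(LL)}(u,v)-\mathbb{E}\hat{C}_{n,\hat{h}_n(u,v)}^{(LL)}(u,v)|\le\sup_{c\log n/n\le h\le b_n}|\hat{C}_{n,h}^{(LL)}(u,v)-\mathbb{E}\hat{C}_{n,h}^{(LL)}(u,v)|$; taking the supremum over $(u,v)$ and invoking the probability version of Theorem \ref{t1} gives the upper half of \eqref{e5}, while $(H_1)$ permits replacing $\hat{h}_n(u,v)$ by the explicit shrinked form \eqref{e4}, whose deterministic bandwidth attains the iterated-logarithm rate, which gives the lower half. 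Likewise, on the event of $(H_2)$ the bias is dominated by the quantity controlled in Theorem \ref{t2}, and since $\sqrt{n}b_n^2/\sqrt{\log\log n}=o(1)$ that quantity multiplied by $R_n$ tends to $0$; this is \eqref{e6}.

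Next, set
$$
W_n=\sup_{0\le u,v\le 1}\frac{R_n}{A(c)}\left|\hat{C}_{n,\hat{h}_n(u,v)}^{(LL)}(u,v)-\mathbb{E}\hat{C}_{n,\hat{h}_n(u,v)}^{(LL)}(u,v)\right|,
$$
$$
V_n=\sup_{0\le u,v\le 1}\frac{R_n}{A(c)}\left|\mathbb{E}\hat{C}_{n,\hat{h}_n(u,v)}^{(LL)}(u,v)-C(u,v)\right|,
$$
$$
Z_n=\sup_{0\le u,v\le 1}\frac{R_n}{A(c)}\left|\hat{C}_{n,\hat{h}_n(u,v)}^{(LL)}(u,v)-C(u,v)\right|,
$$
so that \eqref{e5} reads $W_n\stackrel{\mathbb{P}}{\longrightarrow}1$ and \eqref{e6} reads $V_n\stackrel{\mathbb{P}}{\longrightarrow}0$. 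Since the maps $(u,v)\mapsto\hat{C}_{n,\hat{h}_n(u,v)}^{(LL)}(u,v)$, $(u,v)\mapsto\mathbb{E}\hat{C}_{n,\hat{h}_n(u,v)}^{(LL)}(u,v)$ and $C$ are continuous on the compact square $[0,1]^2$, the suprema defining $W_n$, $V_n$ and $Z_n$ are attained. Applying the triangle inequality pointwise gives $Z_n\le W_n+V_n$, and evaluating at a point where $W_n$ is attained and using the reverse triangle inequality gives $Z_n\ge W_n-V_n$; hence $|Z_n-1|\le|W_n-1|+V_n\stackrel{\mathbb{P}}{\longrightarrow}0$, that is, $Z_n\stackrel{\mathbb{P}}{\longrightarrow}1$.

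It then remains to translate this into the stated probabilities. The coverage event in \eqref{leq1}, namely that $C(u,v)$ belongs to $[\hat{C}_{n,\hat{h}_n(u,v)}^{(LL)}(u,v)-E_n(\epsilon),\,\hat{C}_{n,\hat{h}_n(u,v)}^{(LL)}(u,v)+E_n(\epsilon)]$ for every $(u,v)$, is exactly $\{Z_n\le 1+\epsilon\}$ because $E_n(\epsilon)=(1+\epsilon)A(c)/R_n$; this event contains $\{|Z_n-1|\le\epsilon\}$, whose probability tends to $1$, so \eqref{leq1} follows. Likewise the event in \eqref{leq2} is exactly $\{Z_n\le 1-\epsilon\}$, which is contained in $\{|Z_n-1|\ge\epsilon\}$ and hence has probability tending to $0$; this is \eqref{leq2}.

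The hard part will be the lower half of \eqref{e5}, namely verifying that replacing the deterministic bandwidths of Theorem \ref{t1} by the data-driven $\hat{h}_n(u,v)$ does not destroy the iterated-logarithm rate. This is precisely where $(H_1)$ is indispensable, since it allows the passage to the deterministic shrinked representation \eqref{e4}, to which Theorem \ref{t1} (in its probability version) applies. Everything else is routine: the confinement of the bandwidth by $(H_2)$, the two triangle inequalities, the compactness argument for attainment of the suprema, and the elementary fact that $Z_n\stackrel{\mathbb{P}}{\longrightarrow}1$ forces $\mathbb{P}(Z_n\le 1+\epsilon)\to 1$ and $\mathbb{P}(Z_n\le 1-\epsilon)\to 0$.
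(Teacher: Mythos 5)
Your argument is correct and rests on exactly the two ingredients the paper's own proof starts from, namely the probability-version deviation limit \eqref{e5} and the negligible-bias limit \eqref{e6}; the difference lies in how these are converted into the coverage statements. The paper writes $\mathbb{E}\hat{C}_{n,\hat{h}_n(u,v)}^{(LL)}(u,v)=\hat{C}_{n,\hat{h}_n(u,v)}^{(LL)}(u,v)\pm\left(1+o_p(1)\right)\frac{A(c)}{R_n}$ and splits into two cases according to the sign, producing the sandwich inequalities \eqref{esp2} and \eqref{esp3} which are then merged; this treats a statement about a supremum as if it were a pointwise identity holding at every $(u,v)$, and the case analysis obscures the logic. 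Your route --- introducing $Z_n=\sup_{0\le u,v\le 1}\frac{R_n}{A(c)}\bigl|\hat{C}_{n,\hat{h}_n(u,v)}^{(LL)}(u,v)-C(u,v)\bigr|$, deducing $|Z_n-W_n|\le V_n$ from the two triangle inequalities so that $Z_n\stackrel{\mathbb{P}}{\longrightarrow}1$, and then observing that the events in \eqref{leq1} and \eqref{leq2} are precisely $\{Z_n\le 1+\epsilon\}$ and $\{Z_n\le 1-\epsilon\}$ --- is cleaner and in fact more rigorous at this final step, and it buys you both conclusions simultaneously from the single statement $Z_n\stackrel{\mathbb{P}}{\longrightarrow}1$. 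Two minor remarks: attainment of the suprema is not actually needed (an $\eta$-approximate maximizer of $W_n$ gives $Z_n\ge W_n-V_n-\eta$ for every $\eta>0$, which suffices, so you can drop the continuity claim for the data-driven-bandwidth estimator, which is not obvious); and, like the paper, you take \eqref{e5} as an input rather than proving it --- your sketch of how $(H_1)$ and $(H_2)$ yield it, including the acknowledged difficulty of the lower half, is at the same level of detail as the paper's own ``Useful comment,'' so no gap is introduced relative to the paper's argument.
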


\begin{remark}
Whenever (\ref{leq1}) and (\ref{leq2}) hold jointly for each $\epsilon > 0$, we will say that the intervals 
\begin{equation}\label{interv}
\left[\alpha_n(u,v),\beta_n(u,v)\right]= \left[\hat{C}_{n,\hat{h}_n(u,v)}^{(LL)}(u,v)-\frac{A(c)}{R_n}\:,\:\hat{C}_{n,\hat{h}_n(u,v)}^{(LL)}(u,v)+\frac{A(c)}{R_n}\right],
\end{equation}
provide asymptotic simultaneous optimal confidence bands (at an asymptotic confidence level of 100\% ) for the copula  function $C(u,v),\ 0\leq u,v\leq 1.$ So, with a probability near to 100\%, we can write for all $(u,v)\in [0,1]^2,$
\begin{equation}
 C(u,v)\in\left[\alpha_n(u,v),\beta_n(u,v)\right].
\end{equation}

\end{remark}

\subsection{Simulation results and data-driven applications}

\subsubsection{Simulation results}

We make some simulation studies to evaluate the performance of our asymptotic confidence bands. To this end, we compute  the confidence bands given in (\ref{interv}) for some classical parametric copulas, and check for whether the true copula is lying in these bands. For simplicity, we consider for example  two families of copulas : Frank and Clayton, defined respectively as follows :   
\begin{equation}\label{frk}
C_{\theta, F}(u,v)=-\frac{1}{\theta}\log\left\{1+\frac{\left(e^{-\theta u}-1\right)\left(e^{-\theta v}-1\right)}{e^{-\theta}-1}\right\},\quad \theta \in \mathbb{R}=(-\infty, \infty),
\end{equation}
and
\begin{equation}\label{clay}
C_{\theta,C}(u,v)=\left(u^{-\theta}+v^{-\theta}-1\right)^{-1/\theta},\quad \theta\in (0,\infty).
\end{equation}

We  fix values for the parameter $\theta$, and generate $n$ pairs of data : $(u_i,v_i),i=1,\cdots,n$, respectively from the two copulas by  using the conditional sampling method. The steps for drawing from a  bi-variate copula $C$ are :
\begin{itemize}
	\item step 1 : Generate two values $u$ and $v$ from $\mathcal{U}(0, 1)$,
	\item step 2 : Set $u_i=u$,
	\item step 3 : Compute $v_i=C_2^{-1}(v/u_i)$, where $C_2(v_i/u_i)= \frac{\partial}{\partial u_i} C(u_i, v_i)$.
\end{itemize}

Then  $u_i$ and  $v_i$ are random observations drawn from the copula $C$.

To compute the estimator $\hat{C}^{(LL)}_{n,\hat{h}_n(u,v)}(u,v)$, we take $h_n=1/\log n$ and $b(u,v)$ given by  formula \eqref{poi}, with $\alpha = 0.5$, so that conditions ($H_1)$ and $(H_2)$ are fulfilled. That is the case for $b_n=\left\{(\log\log n)/n\right\}^{1/4}.$ The function $K_{w,h}$  is obtained by integrating the
the local linear kernel function $k_{w,h_n}$ defined in the introduction, where $k$ is the Epanechnikov kernel density defined as  $k(x)=0.75(1-x^2)\mathbb{I}_{\left\{\left|x\right|\leq 1\right\}}$. 
Finally for any $(u,v)\in [0,1]^2$, we compute the confidence interval (\ref{interv})  by taking $A(c)=3$. \\
\indent In Figure 1, we represent the confidence bands and the Frank copula, while Figure 2 represents the confidence bands and the Clayton copula. One can see that the true curves of the two parametric copulas  are well contained in the bands. \\
\begin{figure}[ht]
\begin{center}
\includegraphics[width=8cm]{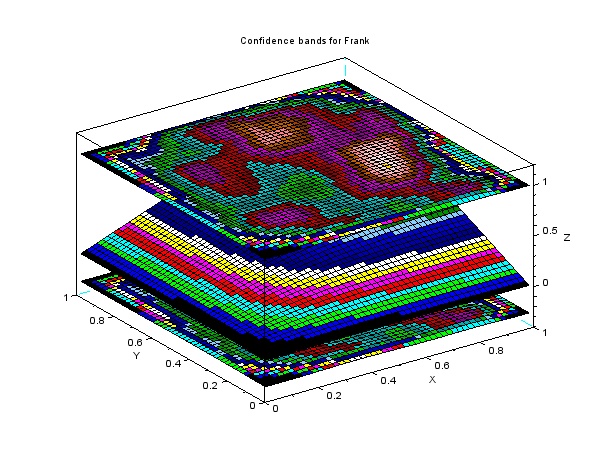}
\caption{Confidence bands for the Frank copula in 3D, with $\theta=5$ }
\label{fig11}
\end{center}
\end{figure}

\begin{figure}[ht]
\begin{center}
\includegraphics[width=8cm]{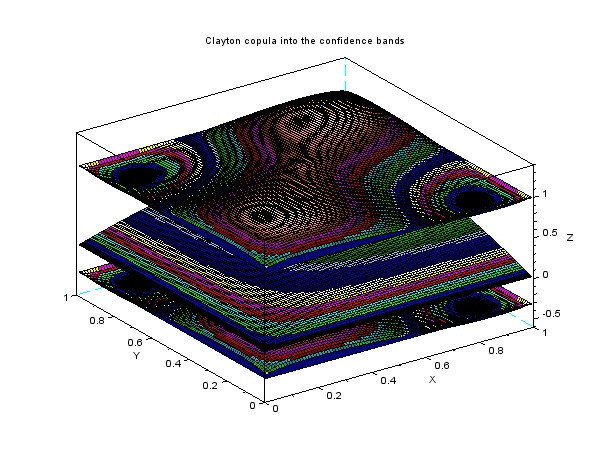}
\caption{Confidence bands for the Clayton copula in 3D, with $\theta=2$ }
\label{fig10}
\end{center}
\end{figure}

As we cannot visualize all the information in the above figures,  we provide in Appendix  some numerical computations to best appreciate the performance of our bands. To this end, we  generate 10 couples $(u,v)$ of random numbers uniformly distributed in $(0,1)$ and compute, for each of them and for each of the considered copulas, the lower bound $\alpha_n(u,v)$, the upper bound $\beta_n(u,v)$  and the true value of the copula $C(u,v)$ for different values of $\theta$. These computations are given in  Appendix B (see, Tables 3 and 4). We can see there, that all the values of $C(u,v)$ are in their respective confidence intervals.
\subsubsection{Data-driven applications}
In this subsection, we apply our theoretical results to select graphically, among various copula families, the one that best fits sample data. Towards this end, we shall represent in a same 2-dimensional graphic the confidence bands established in Theorem \ref{t3} and the curves corresponding to the different copulas considered. To illustrate this,  we use data expenses of senegalese households, available in databases managed by the National Agency of Statistics and Demography (ANSD) of the Republic of Senegal (www.ansd.sn). The data were obtained from  two sample surveys : ESAM2 (Senegalese Survey of Households, 2nd edition, 2001-2002) and  ESPS (Monitoring Survey of Poverty in Senegal, 2005-2006).  For simplicity, we shall deal with the pseudo-panel data utilized by \cite{r21}, which consist of two series of observations  of size $n=116$.
Instead of smoothing these observations denoted by $(X_i; Y_i),i=1,\cdots,n$, we deal with pseudo-observations 
$$\left(\hat{U}_i,\hat{V}_i\right)=\left(\frac{n}{n+1}F_n(X_i),\frac{n}{n+1}G_n(Y_i)\right),$$
to define the kernel estimator for the true copula. Here, $F_n$ and $G_n$ are empirical cumulative distribution functions associated respectively with the samples $X_1,\cdots,X_n$ and $Y_1,\cdots,Y_n$. \\
\indent This application is limited to Archimedean copulas. We will consider for example three parametric families of copulas : Frank, Gumbel and Clayton. Our aim is to find graphically, using our confidence bands, the family that best fits these pseudo-panel data. 
The unknown parameter $\theta$, for each family, is estimated by inversion of Kendall's tau.  For this, we first calculate the empirical Kendall's tau (we find $\hat{\tau} = 0.408 $), and then we deduce from it the values of the parameter $\theta$ for each family. 
\begin{table}[htbp]
\begin{center}
\begin{tabular}{|c|c|c|} 
\hline Copula & $\tau(\theta)$ &  $\theta$  \\
\hline Clayton & $\theta /(\theta+2)$ & 1.38  \\
\hline Gumbel & $(\theta-1)/\theta$ & 1.69  \\
\hline Frank & $1-\frac{4}{\theta}\left\{1-D_1(\theta)\right\} $ & -0.57  \\
\hline 
\end{tabular}
\end{center}
\caption{Expression of Kendall's tau and estimated values for $\theta$}
\label{tab1}
\end{table}
$D_1$ is the Debye function of order 1 defined as : $D_1(x)=\frac{1}{x}\int_0^x \frac{t}{e^t - 1 } dt$.\\
\begin{figure}[ht]
\begin{center}
\includegraphics[width=10cm]{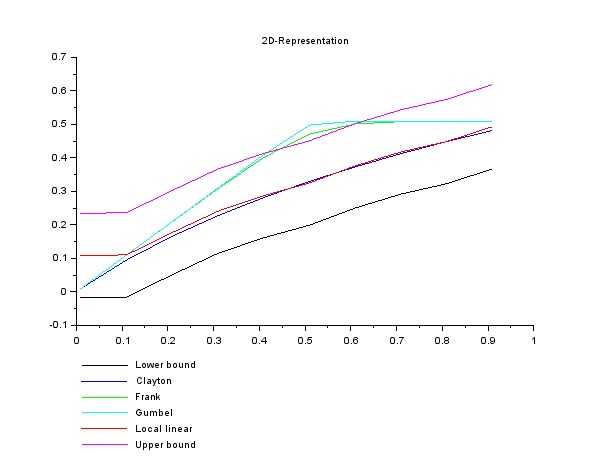}
\caption{Simultaneous representation of the three copulas into the confidence bands.}
\label{fig3}
\end{center}
\end{figure}
Figure 3 shows that the Clayton family seems to be more adequate to fit our pseudo-panel data. That is,  dependence fitting of these Senegalese households expense data is more satisfactory with the  Clayton family than for the other two copulas.\\

We now apply the maximum likelihood method for fitting copulas to comfort our above graphical results. For this, it suffices to compute (see Table \ref{tab2}), for each of the three copulas,  the log-likelihood function 
defined as
$$
L_C\left(\theta, \textbf{u}, \textbf{v}\right)=\sum_{i=1}^n \log C_{12}\left(u_i,v_i\right),
$$
where $C_{12}(u,v)=\frac{\partial}{\partial u}\frac{\partial}{\partial v} C(u,v)$, $\textbf{u}=(u_1,\cdots,u_n)$ and $\textbf{v}=(v_1,\cdots,v_n)$ .\\

From Table \ref{tab2} below,  we can conclude that the Clayton copula fits best our data. So we recommend it to model dependence of the senegalese households expense data in the years 2001 and 2006.\\

\begin{table}[htbp]
\centering
\begin{tabular}{|c|c|c|}
\hline Copula & Estimation of $\theta$ &  log-likelihood  \\
\hline Clayton & 1.38 & 547.61  \\
\hline Gumbel & 1.69 & 28.57  \\
\hline Frank & -0.57 & 299.78  \\
\hline 
\end{tabular}
\caption{Log-likelihood values }
\label{tab2}
\end{table}

\subsection{Concluding remarks}
This paper  presented a nonparametric method to estimate the copula function by providing asymptotic confidence bands based on the local linear kernel estimator. The results are applied to select graphically the best copula function that fits the dependence structure between pseudo-panel data. 

In perspective, similar results can be obtained with other kernel-type estimators of copula function like the mirror-reflection and transformation estimators.

\section{Proofs}
In this section, we first expose technical details allowing us to use the methodology of Mason (\cite{r3}) described in Proposition \ref{p1} and Corollary \ref{crl1} that are necessary to prove our results. In the second step we give successively the proofs of the theorems stated in Section 2.\\

We begin by decomposing the difference $\hat{C}_{n,h}^{(LL)}(u,v)- C(u,v)$, $(u,v)\in [0,1]^2$ as follows : 
$$
 \hat{C}_{n,h}^{(LL)}(u,v)-C(u,v)=\underbrace{\left[\hat{C}_{n,h}^{(LL)}(u,v)-\mathbb{E}\hat{C}_{n,h}^{(LL)}(u,v)\right]}_{Deviation}+\underbrace{\left[\mathbb{E}\hat{C}_{n,h}^{(LL)}(u,v)-C(u,v)\right]}_{Bias}
.$$
The probabilistic term $$\hat{D}_{n,h}(u,v):=\hat{C}_{n,h}^{(LL)}(u,v)-\mathbb{E}\hat{C}_{n,h}^{(LL)}(u,v)$$  is called the deviation of the estimator from its expectation. We'll study its behavior by making use of the methodology described in \cite{r3}. The other term that we denote 
$$ B_{n,h}(u,v):=\mathbb{E}\hat{C}_{n,h}^{(LL)}(u,v)-C(u,v)$$ is the so-called bias of the estimator. It is deterministic and its behavior will depend upon the smoothness conditions on the copula $C$ and the bandwidth $h$.\\
Recall the estimator proposed by Deheuvels in \cite{r7}, which is defined as 
$$ \hat{C}_n(u,v)=\frac{1}{n}\sum_{i=1}^n\mathbb{I}\{\hat{U}_i\leq u,\hat{V}_i\leq v\},\quad\hat{U}_i=F_n(X_i),\quad \hat{V}_i=G_n(Y_i), $$
where $F_n$ and $G_n$ are the empirical cumulative distribution functions of the marginals $F$ and $G$. This estimator is asymptotically equivalent (up to a term $O(n^{-1})$) with the estimator based directly on Sklar's Theorem given by
$$ C_n(u,v)= H_n(F_n^{-1}(u),G_n^{-1}(v)),$$
with $H_n$ the empirical joint distribution function of $(X,Y)$. Then the empirical copula process is defined as
$$\mathbb{C}_n(u,v)=\sqrt{n}[C_n(u,v)- C(u,v)].$$
To study the behavior of the deviation $\hat{D}_{n,h}(u,v)$, we introduce the following notation.
Let
$$ \tilde{C}_n(u,v)=\frac{1}{n}\sum_{i=1}^n\mathbb{I}\{U_i\leq u,V_i\leq v\}$$
be the uniform bivariate empirical distribution function based on a sample \\$(U_1,V_1),\cdots,(U_n,V_n)$ of i.i.d random  variables uniformly distributed on $[0,1]^2$. 
Define the following empirical process
 $$\mathbb{\tilde{C}}_n(u,v)=\sqrt{n}[\tilde{C}_n(u,v)- C(u,v)],\quad (u,v)\in[0,1]^2.$$
Then one can  observe that 
\begin{equation}\label{c3}
\mathbb{\tilde{C}}_n(u,v)=\mathbb{C}_n(u,v)+ \frac{1}{\sqrt{n}}.
\end{equation}
 For all $(u,v)\in [0,1]^2$, define 
\begin{eqnarray*}
g_{n,h}&=&\hat{C}_{n,h}^{(LL)}(u,v)-\tilde{C}_n(u,v)\\
&  =& \frac{1}{n}\sum_{i=1}^{n}\left [K_{u,h}\left(\frac{u-\hat{U}_i}{h} \right)K_{v,h}\left(\frac{v-\hat{V}_i}{h} \right)- \mathbb{I}\{U_i\leq u,V_i\leq v\}\right ]\\
& =&\frac{1}{n}\sum_{i=1}^{n}\left [K_{u,h}\left(\frac{u-\hat{F}_n oF^{-1}(U_i)}{h} \right)K_{v,h}\left(\frac{v-\hat{G}_n oG^{-1}(V_i)}{h} \right)-\mathbb{I}\{U_i\leq u,V_i\leq v\}\right ] \\
& =:& \frac{1}{n}\sum_{i=1}^{n}g(U_i,V_i,h),
\end{eqnarray*}
where $g$ belongs to a class of measurable functions $\mathcal{G}$ defined as 
$$
\mathcal{G}=\left\{\begin{array}{c} 
(s,t)\mapsto g(s,t,h)=K_{u,h}\left(\frac{u-\zeta_1(s)}{h} \right)K_{v,h}\left(\frac{v-\zeta_2(t)}{h}\right)- \mathbb{I}\{s\leq u,t\leq v\},\\
 u,v\in[0,1]^2, 0<h<1 \,\text{and }\, \zeta_1,\zeta_2:[0,1]\mapsto [0,1] \,\text{nondecreasing.}
\end{array} \right\}
$$
Since ${\tilde{C}}_n(u,v)$ is an unbiased estimator for $C(u,v)$, one can observe that 

$$
\sqrt{n}\vert g_{n,h}- \mathbb{E}g_{n,h}\vert= \vert\sqrt{n}\hat{D}_{n,h}(u,v)-\mathbb{\tilde{C}}_n(u,v)\vert.
$$

To make use of Mason's Theorem in \cite{r3}, the class of functions $\mathcal{G}$ must verify the following  four conditions : 
\begin{itemize}
\item[(G.i)]\ \ \ $\displaystyle \sup_{0\leq h\leq 1}\sup_{g\in \mathcal{G}}\left\|g\left(U,V,h\right)\right\|_\infty=:\kappa <\infty$
\item[(G.ii)]\ \ \ There exists some constant $C'>0$ such that for all $h\in [0,1]$,
\\$\displaystyle \sup_{g\in \mathcal{G}}\mathbb{E}\left[g^2\left(U,V,h\right)\right]\leq C'h $
\item[(F.i)]\ \ \ 
$\mathcal{G}$ satisfies the uniform entropy condition, i.e., $\exists \ \ C_0>0, \nu_0>0\ :\ N\left(\epsilon,\mathcal{G}\right)\leq C_0\epsilon^{-\nu_0}$.
\item[(F.ii)]\ \ \ $\mathcal{G}$ is a pointwise measurable class, i.e there exists a countable sub-class $\mathcal{G}_0$ of $\mathcal{G}$ such that for all $g\in \mathcal{G}$, there exits $\left(g_m\right)_m\subset \mathcal{G}_0$ such that $g_m\longrightarrow g.$
\end{itemize}

The checking of these conditions constitutes the proof of the following proposition which will be done in Appendix A.

\begin{proposition}\label{p1}
Suppose that the copula function $C$ has bounded first order partial derivatives on $(0, 1)^2$. Assuming (G.i), (G.ii), (F.i) and (F.ii), we have for $c > 0,\ 0 < h_0 < 1,$ with probability one 
$$
\limsup_{n\rightarrow\infty}\sup_{\frac{c\log n}{n}\leq h \leq h_0}\sup_{(u,v)\in(0,1)^2}
\frac{|\sqrt{n}\hat{D}_{n,h}(u,v)-\tilde{\mathbb{C}}_n(u,v)|}{\sqrt{h(|\log h\vert\vee\log\log n)}}=A(c),
$$
where $A(c)$ is a positive constant.
\end{proposition}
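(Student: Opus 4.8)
The plan is to reduce the statement to the general uniform-in-bandwidth empirical process result of Mason \cite{r3} (specifically the kind of statement proved in Deheuvels and Mason \cite{r14}) applied to the class $\mathcal{G}$ introduced above. Recall that we have written
$$
\sqrt{n}\,\bigl(g_{n,h}-\mathbb{E} g_{n,h}\bigr)=\sqrt{n}\hat{D}_{n,h}(u,v)-\tilde{\mathbb{C}}_n(u,v),
$$
so the numerator in the proposition is exactly $\sqrt{n}\,|g_{n,h}-\mathbb{E} g_{n,h}|=|(\mathbb{P}_n-\mathbb{P})(\sqrt{n}\,g(\cdot,\cdot,h))|$ written in empirical-process notation, where $\mathbb{P}_n$ is the empirical measure of the i.i.d.\ uniform sample $(U_i,V_i)$. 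Thus the quantity to control is a supremum, over $h$ in the range $[c\log n/n,\,h_0]$ and over the location parameters $(u,v)\in(0,1)^2$, of a centred empirical process indexed by $\mathcal{G}$, normalised by $\sqrt{h(|\log h|\vee\log\log n)}$. This is precisely the setting of the functional LIL for such normalised empirical processes.

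The key steps, in order, would be: (1) verify that the four hypotheses (G.i), (G.ii), (F.i), (F.ii) are exactly the hypotheses required by Mason's theorem — boundedness of the envelope, the linear-in-$h$ bound on the variance, the uniform (polynomial) covering-number bound, and pointwise measurability — so that the abstract result applies verbatim to $\mathcal{G}$; these verifications are what is deferred to Appendix A and may be assumed here. (2) Invoke the abstract theorem: for a VC-type class with envelope $\kappa$ and variance bound $\sup_{g}\mathbb{E} g^2(U,V,h)\le C'h$, one has, with probability one,
$$
\limsup_{n\to\infty}\ \sup_{c\log n/n\le h\le h_0}\ \sup_{g\in\mathcal{G}_h}\ \frac{|(\mathbb{P}_n-\mathbb{P})(n g)|^{1/2}\cdot n^{-1/2}\cdots}{\sqrt{h(|\log h|\vee\log\log n)}}
$$
equals a finite positive constant; more precisely, writing $\alpha_n=\sqrt{n}(\mathbb{P}_n-\mathbb{P})$ for the uniform empirical process and $\mathcal{G}_h$ for the sub-family of $\mathcal{G}$ with bandwidth $h$, the statement is that
$$
\limsup_{n\to\infty}\ \sup_{c\log n/n\le h\le h_0}\ \frac{\sup_{g\in\mathcal{G}_h}|\alpha_n(g)|}{\sqrt{h(|\log h|\vee\log\log n)}}=A(c)\in(0,\infty)
$$
almost surely. (3) Identify $\sup_{g\in\mathcal{G}_h}|\alpha_n(g)|$ with $\sup_{(u,v)}\sqrt{n}|g_{n,h}-\mathbb{E} g_{n,h}|=\sup_{(u,v)}|\sqrt{n}\hat{D}_{n,h}(u,v)-\tilde{\mathbb{C}}_n(u,v)|$, which gives the claimed formula, with the constant $A(c)$ inherited from the abstract theorem.

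The main obstacle is really the translation and bookkeeping rather than a new probabilistic difficulty: one must check that the class $\mathcal{G}$, which is indexed simultaneously by the location $(u,v)$, the bandwidth $h$, and the two arbitrary nondecreasing maps $\zeta_1,\zeta_2$ (these absorb the randomness coming from the estimated marginals $\hat F_n\circ F^{-1}$ and $\hat G_n\circ G^{-1}$), genuinely satisfies (G.i)--(F.ii) uniformly in all these parameters — in particular that allowing $\zeta_1,\zeta_2$ to range over \emph{all} monotone self-maps of $[0,1]$ does not blow up the covering number beyond polynomial order and does not destroy the $O(h)$ variance bound. The variance bound (G.ii) is the delicate one: it hinges on the fact that $K_{u,h}$ is (up to boundary corrections built into the local-linear construction) a smoothed indicator supported in a band of width $O(h)$ around $u$, so that $g(U,V,h)$ is supported on an event of probability $O(h)$ after using the boundedness of the first-order partial derivatives of $C$; this is where the smoothness hypothesis on $C$ enters. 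Once (G.i)--(F.ii) are in hand (Appendix A), the proposition follows by direct citation of Mason's functional LIL, with no further work.
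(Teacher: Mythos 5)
Your proposal matches the paper's own strategy: the paper's proof of Proposition~\ref{p1} consists precisely of verifying (G.i), (G.ii), (F.i) and (F.ii) for the class $\mathcal{G}$ (done in Appendix~A) and then citing Mason's uniform-in-bandwidth theorem from \cite{r3}, with the identification $\sup_{g\in\mathcal{G}_h}|\alpha_n(g)|=\sup_{(u,v)}|\sqrt{n}\hat{D}_{n,h}(u,v)-\tilde{\mathbb{C}}_n(u,v)|$ exactly as you describe. Your remarks on where the real work lies --- the $O(h)$ variance bound via the bounded first-order partials and the covering-number control over the monotone maps $\zeta_1,\zeta_2$ --- correctly locate the delicate points of that verification.
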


\begin{corollary}\label{crl1}
 Under the assumptions of Proposition \ref{p1}, for any sequence of constants $0<b_n<1,$ satisfying $\ b_n\rightarrow 0,\ b_n\geq (\log n)^{-1}$, one has with probability one 
 $$
 \sup_{\frac{c\log n}{n}\leq h \leq b_n}\sup_{(u,v)\in(0,1)^2}
\frac{|\sqrt{n}\hat{D}_{n,h}(u,v)-\tilde{\mathbb{C}}_n(u,v)|}{\sqrt{\log\log n}}=O(\sqrt{b_n}).
 $$
 \end{corollary}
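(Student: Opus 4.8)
The plan is to read Corollary \ref{crl1} off from Proposition \ref{p1} by controlling the normalizing factor $\sqrt{h(|\log h|\vee\log\log n)}$ uniformly over the shrinking range $\tfrac{c\log n}{n}\le h\le b_n$. Since $b_n\to 0$ and $b_n\ge(\log n)^{-1}$, for all $n$ large enough one has $\tfrac{c\log n}{n}\le b_n\le h_0$, so the double supremum appearing in the Corollary is taken over a sub-range of the one in Proposition \ref{p1}. Fixing $\epsilon>0$, Proposition \ref{p1} then gives that, with probability one, there is a (random) integer $N$ such that for every $n\ge N$,
\[
\sup_{\frac{c\log n}{n}\le h\le b_n}\ \sup_{(u,v)\in(0,1)^2}\frac{|\sqrt{n}\hat{D}_{n,h}(u,v)-\tilde{\mathbb{C}}_n(u,v)|}{\sqrt{h(|\log h|\vee\log\log n)}}\ \le\ A(c)+\epsilon .
\]

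The second and main step is the purely deterministic inequality
\[
\sup_{\frac{c\log n}{n}\le h\le b_n}\sqrt{h\,(|\log h|\vee\log\log n)}\ \le\ \sqrt{b_n\,\log\log n},
\]
valid for $n$ large. To obtain it, write $h(|\log h|\vee\log\log n)=\max\{\,h|\log h|,\ h\log\log n\,\}$. On $(0,e^{-1})$ the map $t\mapsto t|\log t|=-t\log t$ is nondecreasing, so for $h\le b_n<e^{-1}$ one has $h|\log h|\le b_n|\log b_n|$, while trivially $h\log\log n\le b_n\log\log n$. Hence the left-hand side is at most $b_n\max\{|\log b_n|,\log\log n\}$, and the hypothesis $b_n\ge(\log n)^{-1}$ yields $|\log b_n|=-\log b_n\le\log\log n$, so the maximum equals $\log\log n$. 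Combining the two displays, with probability one, for all $n\ge N$,
\[
\sup_{\frac{c\log n}{n}\le h\le b_n}\ \sup_{(u,v)\in(0,1)^2}|\sqrt{n}\hat{D}_{n,h}(u,v)-\tilde{\mathbb{C}}_n(u,v)|\ \le\ (A(c)+\epsilon)\,\sqrt{b_n}\,\sqrt{\log\log n},
\]
and dividing through by $\sqrt{\log\log n}$ gives the announced rate, since $A(c)+\epsilon$ is a finite constant; that is, the left-hand side of the Corollary is $O(\sqrt{b_n})$ almost surely.

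The only delicate point — and the nearest thing to an obstacle — is the monotonicity bookkeeping for $h(|\log h|\vee\log\log n)$: one has to use that $t\mapsto -t\log t$ is increasing near $0$ in order to transfer the supremum in $h$ to the right endpoint $b_n$, and then exploit precisely the standing assumption $b_n\ge(\log n)^{-1}$ so that the term $\log\log n$ dominates the maximum. Once these elementary facts are in place, the statement is a routine specialization of Proposition \ref{p1}, with no further probabilistic input required.
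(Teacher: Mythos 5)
Your proof is correct and follows essentially the same route as the paper: both arguments transfer the supremum in $h$ to the endpoint $b_n$ via the monotonicity of $t\mapsto -t\log t$ near $0$, and then use the hypothesis $b_n\geq(\log n)^{-1}$ to show $|\log b_n|\leq\log\log n$ so that $\log\log n$ dominates the maximum, reducing the corollary to Proposition \ref{p1}. No further comment is needed.
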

\begin{proof}{( \textbf{Corollary \ref{crl1})}}\\
First, observe that the condition $b_n\geq (\log n)^{-1}$ yields
\begin{equation}\label{dc}
\frac{\vert \log b_n\vert}{\log\log n}\leq 1.
\end{equation}
Next, by the monotonicity of the function $x\mapsto x\vert\log x \vert$ on $[0,1/e]$, one can write for $n$ large enough, $h\vert\log h \vert\leq b_n\vert\log b_n \vert$ and hence,
\begin{equation}
h(\vert\log h \vert  \vee\log\log n)\leq b_n(\vert\log b_n \vert \vee\log\log n).
\end{equation}
Combining this and Proposition \ref{p1}, we obtain 
$$
 \sup_{\frac{c\log n}{n}\leq h \leq b_n}\sup_{(u,v)\in(0,1)^2}
\frac{|\sqrt{n}\hat{D}_{n,h}(u,v)-\tilde{\mathbb{C}}_n(u,v)|}{\sqrt{b_n\log\log n\left(\frac{\vert \log b_n\vert}{\log\log n}\vee 1\right)}}=O(1).
 $$
Thus the Corrollary \ref{crl1} follows from  \eqref{dc}.

\end{proof}

\begin{proof}{(\textbf{Theorem \ref{t1}})}\\
The proof is based upon an approximation of the empirical copula process $\mathbb{C}_n$ by a Kiefer process (see \cite{r18}, p. 100). Let $\mathbb{W}(u,v,t)$ be a $3$-parameters  Wiener process defined on $[0,1]^2\times[0,\infty)$. Then the gaussian process $\mathbb{K}(u,v,t)=\mathbb{W}(u,v,t)-\mathbb{W}(1,1,t).uv$ is called a $3$-parameters  Kiefer process defined on $[0,1]^2\times[0,\infty)$.\\

By Theorem 3.2 in \cite{r18}, for $d=2$, there exists a sequence of gaussian processes $\left\{\mathbb{K}_{C}(u,v,n), u,v\in[0,1], n>0\right\}$ such that
$$ \sup_{(u,v)\in[0,1]^2}\left|\sqrt{n}\mathbb{C}_n(u,v)-\mathbb{K}_{C}^\ast(u,v,n)\right|=O\left(n^{3/8}(\log n)^{3/2}\right),$$ where 
$$\mathbb{K}_{C}^\ast(u,v,n)=\mathbb{K}_{C}(u,v,n)-\mathbb{K}_{C}(u,1,n)\frac{\partial C(u,v)}{\partial u}-\mathbb{K}_\mathbb{C}(1,v,n)\frac{\partial C(u,v)}{\partial v}.$$
This yields
\begin{equation}\label{c1}
\limsup_{n\rightarrow\infty}\sup_{(u,v)\in[0,1]^2}\frac{\left|\mathbb{C}_n(u,v)\right|}{\sqrt{2\log\log n}}=\limsup_{n\rightarrow\infty}\sup_{(u,v)\in[0,1]^2}\frac{\left|\mathbb{K}_{C}^\ast(u,v,n)\right|}{\sqrt{2n\log\log n}}.
\end{equation}
By the works of Wichura on the  iterated law of logarithm (see \cite{r19}),
one has
\begin{equation}\label{c2}
\limsup_{n\rightarrow\infty}\sup_{(u,v)\in[0,1]^2}\frac{\left|\mathbb{K}_\mathbb{C}^\ast(u,v,n)\right|}{\sqrt{2n\log\log n}}\leq 3,
\end{equation}

\noindent which readily implies  
$$\limsup_{n\rightarrow\infty}\sup_{(u,v)\in[0,1]^2}\frac{\left|\mathbb{C}_n(u,v)\right|}{\sqrt{2\log\log n}}\leq 3.$$
Since ${\mathbb{C}}_n(u,v)$ and $\tilde{\mathbb{C}}_n(u,v)$ are asymptotically equivalent in view of (\ref{c3}), one obtains
$$\limsup_{n\rightarrow\infty}\sup_{(u,v)\in[0,1]^2}\frac{\left|\tilde{\mathbb{C}}_n(u,v)\right|}{\sqrt{2\log\log n}}\leq 3.$$
The proof is then finished by applying Corollary \ref{crl1} which yields
\begin{equation}\label{lil}
\limsup_{n\rightarrow\infty}\sup_{\frac{c\log n}{n}\leq h \leq b_n}\sup_{(u,v)\in[0,1]^2}\frac{\left|\sqrt{n}\hat{D}_{n,h}(u,v)\right|}{\sqrt{2\log\log n}}\leq 3.
\end{equation}
Thus, there exists a contant $A(c)$, with $0<A(c)\leq 3$, such that 
\begin{equation}\label{lilo}
\limsup_{n\rightarrow\infty}\left\{\left(\frac{n}{2\log\log n}\right)^{1/2}\sup_{\frac{c\log n}{n}\leq h \leq b_n}\sup_{(u,v)\in[0,1]^2}\left|\hat{C}_{n,h}^{(LL)}(u,v)-\mathbb{E}\hat{C}_{n,h}^{(LL)}(u,v)\right|\right\}=A(c).
\end{equation}
\end{proof}
 
\begin{proof} (\textbf{ Theorem \ref{t2}})\\
 For all $(u,v)\in [0,1]^2$, one has
$$
C(u,v)=\int_{-1}^{1}\int_{-1}^{1}C(u,v)k_{u,h}(s)k_{v,h}(t)dsdt
$$
and
$$
\mathbb{E}\hat{C}_{n,h}^{(LL)}(u,v)= \mathbb{E}\left[K_{u,h}\left(\frac{u-\zeta_{1,n}(U_i)}{h} \right)K_{v,h}\left(\frac{v-\zeta_{2,n}(V_i)}{h}\right)\right],
$$
with $\zeta_{1,n}(U_i)=\hat{F}_n oF^{-1}(U_i)=\hat{U}_i$  and  $\,\,\zeta_{2,n}(V_i)=\hat{G}_n oG^{-1}(V_i)=\hat{V}_i$.
We can easily show that 
$$
\mathbb{E}\hat{C}_{n,h}^{(LL)}(u,v)=\int_{-1}^{1}\int_{-1}^{1}C\left(\zeta_{1,n}^{-1}(u-sh),\zeta_{2,n}^{-1}(v-th)\right)k_{u,h}(s)k_{v,h}(t)dsdt.
$$ 
Hence
\begin{eqnarray*}
B_{n,h}(u,v)& = & \mathbb{E}\hat{C}_{n,h}^{(LL)}(u,v) - C(u,v)\\
        & = &\int_{-1}^{1}\int_{-1}^{1}\left[ C\left(\zeta_{1,n}^{-1}(u-sh),\zeta_{2,n}^{-1}(v-th) - C(u,v)\right)\right] k_{u,h}(s)k_{v,h}(t)dsdt.
\end{eqnarray*}
By continuity of $F$ and $G$, we have for $n$ large enough, 
$$
\zeta_{1,n}^{-1}(u-sh)= Fo\hat{F}_n^{-1}(u-sh)\sim u-sh
$$ and
$$
\zeta_{2,n}^{-1}(v-th)= Go\hat{G}_n^{-1}(v-th)\sim v-th.
$$      
Thus,
$$
B_{n,h}(u,v)= \int_{-1}^{1}\int_{-1}^{1}\left[ C(u-sh,v-th) - C(u,v)\right] k_{u,h}(s)k_{v,h}(t)dsdt.
$$
By applying a 2-order Taylor expansion  and taking account of the symmetry of the kernels $k_{u,h}(.)$ and $k_{v,h}(.)$   i.e, 
$$ \int_{-1}^{1}sk_{u,h}(s)ds=0\quad \text{and} \quad \int_{-1}^{1}tk_{v,h}(t)dt=0,$$
we obtain, by Fubini, that for all $(u,v)\in [0,1]^2$,
 \begin{eqnarray*}
 B_{n,h}(u,v)&=& h^2\int_{-1}^{1}\int_{-1}^{1}\left[ s^2\frac{\partial^2 C(u,v)}{\partial u^2} + st\frac{\partial^2 C(u,v)}{\partial u\partial v}+ s^2\frac{\partial^2 C(u,v)}{\partial v^2}\right] k_{u,h}(s)k_{v,h}(t)dsdt\\
 & = & h^2\left[\int_{-1}^{1} s^2 \frac{\partial^2 C(u,v)}{\partial u^2}k_{u,h}(s)ds + \int_{-1}^{1}t^2\frac{\partial^2 C(u,v)}{\partial v^2} k_{v,h}(t)dt\right].
 \end{eqnarray*}
Since the second order partial derivatives are assumed to be bounded, then we can infer that
$$
\sup_{\frac{c\log n}{n}\leq h \leq b_n}\sup_{(u,v)\in[0,1]^2}B_{n,h}(u,v)= O(b_n^2)
$$
and hence,
\begin{equation} 
\left(\frac{n}{2\log\log n}\right)^{1/2}\sup_{\frac{c\log n}{n}\leq h \leq b_n}\sup_{(u,v)\in[0,1]^2}B_{n,h}(u,v)=O\left(\frac{\sqrt{n}b_n^2}{\sqrt{2\log\log n}}\right)=o(1). 
\end{equation}
\end{proof}

\begin{proof} (\textbf{ Theorem \ref{t3}})\\
It suffices to show \eqref{leq1} and \eqref{leq2}. From \eqref{e6}, we can infer that for any given $\epsilon>0$ and $\delta>0$, there exists $N\in\mathbb{N}$ such that for all $n>N$,
$$
\mathbb{P}\left(\frac{R_n}{A(c)}\left|\mathbb{E}\hat{C}_{n,\hat{h}_n(u,v)}^{(LL)}(u,v)-C(u,v)\right|>\epsilon,\;\forall (u,v)\in[0,1]^2\right)>\delta.
$$ 
That is
\begin{equation}\label{esp1}
\mathbb{P}\left(\mathbb{E}\hat{C}_{n,\hat{h}_n(u,v)}^{(LL)}(u,v)-\epsilon\frac{A(c)}{R_n}\leq C(u,v)\leq \mathbb{E}\hat{C}_{n,\hat{h}_n(u,v)}^{(LL)}(u,v)+\epsilon\frac{A(c)}{R_n},\;\forall (u,v)\in[0,1]^2\right)>1-\delta.
\end{equation}
On the other hand we deduce from \eqref{e5} that for all $(u,v)\in[0,1]^2$,
\begin{equation*}
\mathbb{E}\hat{C}_{n,\hat{h}_n(u,v)}^{(LL)}(u,v)=\hat{C}_{n,\hat{h}_n(u,v)}^{(LL)}(u,v)\pm \left(1+o_p(1)\right)\frac{A(c)}{R_n}.
\end{equation*}
\begin{case}
If 
\begin{equation*}
\mathbb{E}\hat{C}_{n,\hat{h}_n(u,v)}^{(LL)}(u,v)=\hat{C}_{n,\hat{h}_n(u,v)}^{(LL)}(u,v)-\left(1+o_p(1)\right)\frac{A(c)}{R_n},
\end{equation*}
\end{case}
 then \eqref{esp1} becomes
\begin{eqnarray*}
\mathbb{P}\bigg(\hat{C}_{n,\hat{h}_n(u,v)}^{(LL)}(u,v)&-&E_n(\epsilon)-o_p(1)\frac{A(c)}{R_n}\leq C(u,v) \\
&\leq& \hat{C}_{n,\hat{h}_n(u,v)}^{(LL)}(u,v)-\Delta_n(\epsilon)-o_p(1)\frac{A(c)}{R_n},\;\forall (u,v)\in[0,1]^2\bigg)>1-\delta.
\end{eqnarray*}
Thus, for any given $\tau \in (0,1)$ and all large $n$, we can write
\begin{eqnarray}\label{esp2}
\mathbb{P}\bigg(\hat{C}_{n,\hat{h}_n(u,v)}^{(LL)}(u,v)&-&E_n(\epsilon)-\tau\frac{A(c)}{R_n}\leq C(u,v) \nonumber \\
&\leq& \hat{C}_{n,\hat{h}_n(u,v)}^{(LL)}(u,v)-\Delta_n(\epsilon)-\tau\frac{A(c)}{R_n},\;\forall (u,v)\in[0,1]^2\bigg)>1-\delta.
\end{eqnarray}
\begin{case}
If $$\mathbb{E}\hat{C}_{n,\hat{h}_n(u,v)}^{(LL)}(u,v)=\hat{C}_{n,\hat{h}_n(u,v)}^{(LL)}(u,v)+\left(1+o_p(1)\right)\frac{A(c)}{R_n},$$
\end{case}
then, analogously to \textbf{Case 1}, we  can infer from \eqref{esp1} that, for any given $\tau \in (0,1)$ and all large $n$,
\begin{eqnarray}\label{esp3}
\mathbb{P}\bigg(\hat{C}_{n,\hat{h}_n(u,v)}^{(LL)}(u,v)&+&\Delta_n(\epsilon)+\tau\frac{A(c)}{R_n}\leq C(u,v) \nonumber\\
&\leq& \hat{C}_{n,\hat{h}_n(u,v)}^{(LL)}(u,v)+E_n(\epsilon)+\tau\frac{A(c)}{R_n},\;\forall (u,v)\in[0,1]^2\bigg)>1-\delta.
\end{eqnarray}
Letting  $\tau$ tends to $0$, it follows from \eqref{esp2} and \eqref{esp3} that

\begin{equation}\label{esp22}
\mathbb{P}\left(\hat{C}_{n,\hat{h}_n(u,v)}^{(LL)}(u,v)-E_n(\epsilon)\leq C(u,v)\leq \hat{C}_{n,\hat{h}_n(u,v)}^{(LL)}(u,v)-\Delta_n(\epsilon),\;\forall (u,v)\in[0,1]^2\right)>1-\delta
\end{equation}
or
\begin{equation}\label{esp33}
\mathbb{P}\left(\hat{C}_{n,\hat{h}_n(u,v)}^{(LL)}(u,v)+\Delta_n(\epsilon)\leq C(u,v)\leq \hat{C}_{n,\hat{h}_n(u,v)}^{(LL)}(u,v)+E_n(\epsilon),\;\forall (u,v)\in[0,1]^2\right)>1-\delta.
\end{equation}
Now, by observing that $$\hat{C}_{n,\hat{h}_n(u,v)}^{(LL)}(u,v)-\Delta_n(\epsilon)\leq \hat{C}_{n,\hat{h}_n(u,v)}^{(LL)}(u,v)+\Delta_n(\epsilon),\ \forall \epsilon,$$
we can write, for any  $\epsilon>0$, with  probability tending to 1,
$$C(u,v)\in\left[\hat{C}_{n,\hat{h}_n(u,v)}^{(LL)}(u,v)-E_n(\epsilon),\hat{C}_{n,\hat{h}_n(u,v)}^{(LL)}(u,v)+E_n(\epsilon)\right],\ \forall (u,v)\in [0,1]^2$$

and

$$C(u,v)\notin \left[\hat{C}_{n,\hat{h}_n(u,v)}^{(LL)}(u,v)-\Delta_n(\epsilon),\hat{C}_{n,\hat{h}_n(u,v)}^{(LL)}(u,v)+\Delta_n(\epsilon)\right],\ \forall (u,v)\in [0,1]^2.$$
That is, \eqref{leq1} and \eqref{leq2} hold.
\end{proof}




 

\appendix{\textbf{\Large{Appendix A} : Proof of Proposition 1}}

\begin{proof}
It suffices to check the conditions (G.i), (G.ii), (F.i) and (F.ii) given in section 3.\\
\noindent \textbf{Checking for (G.i).} For $(u,v)\in[0,1]^2$ and $g\in\mathcal{G}$, one has 
\begin{eqnarray*}
 g\left(U_i,V_i,h\right)& = & K_{u,h}\left(\frac{u-\hat{U}_i}{h}\right)K_{v,h}\left(\frac{v-\hat{V}_i}{h}\right)-\mathbb{I}\{U_i\leq u,V_i\leq v\}\\
 &= & \int_{-\infty}^{\frac{u-\hat{U}_i}{h}}k_{u,h}(t)dt \int_{-\infty}^{\frac{v-\hat{V}_i}{h}}k_{v,h}(s)ds-\mathbb{I}\{U_i\leq u,V_i\leq v\}\\
& = & \int_{-1}^{1}\int_{-1}^{1}\mathbb{I}_{\left\{\hat{U}_i\leq u-th ,\hat{V}_i\leq v-sh\right\}}\;k_{u,h}(t)\;k_{v,h}(s)dt\;ds-\mathbb{I}\{U_i\leq u,V_i\leq v\}\\
&\leq & \int_{-1}^{1}\int_{-1}^{1}k_{u,h}(t)\;k_{v,h}(s)dt\;ds-\mathbb{I}\{U_i\leq u,V_i\leq v\}\ \leq 4\ \left\|k_{u,h}\right\|\left\|k_{v,h}\right\|+1.
\end{eqnarray*}

Then, $$ \left\|g\left(U_i,V_i,h\right)\right\|_\infty \leq 4\left\|k_{u,h}\right\|\left\|k_{v,h}\right\|+1=:\kappa. $$ 
This implies $$ \sup_{0\leq h\leq 1}\sup_{g\in \mathcal{G}}\left\|g\left(U_i,V_i,h\right)\right\|_\infty=:\kappa <\infty. $$

\noindent \textbf{Checking for (G.ii).}\\
 We have to show that
 $$ \sup_{g\in \mathcal{G}}\mathbb{E}g^2(U,V,h)\leq C'h ,$$ where $C'$ is a constant. Recall that $\zeta_{1,n}^{-1}(u)= Fo\hat{F}_n^{-1}(u)$ and $\zeta_{2,n}^{-1}(v-th)= Go\hat{G}_n^{-1}(v).$ Then, we can write
\begin{eqnarray*}
\mathbb{E}g^2(U,V,h) &= & \mathbb{E}\left[K_{u,h}\left(\frac{u-\zeta_{1,n}(U)}{h} \right)K_{v,h}\left(\frac{v-\zeta_{2,n}(V)}{h}\right)- \mathbb{I}\{U\leq u,V\leq v\}\right]^2\\
&= & \mathbb{E}\left[K_{u,h}\left(\frac{u-\zeta_{1,n}(U)}{h} \right)K_{v,h}\left(\frac{v-\zeta_{2,n}(V)}{h}\right)\right]^2\\
& & - 2\mathbb{E}\left[K_{u,h}\left(\frac{u-\zeta_{1,n}(U)}{h} \right)K_{v,h}\left(\frac{v-\zeta_{2,n}(V)}{h}\right)\mathbb{I}\{U\leq u,V\leq v\}\right] +C(u,v)\\
& =: & A -2B + C(u,v).
\end{eqnarray*}
Now we express $A$ and $B$ as  integrals of the copula function $C(u,v)$.
\begin{eqnarray*}
B & = &\mathbb{E}\left[K_{u,h}\left(\frac{u-\zeta_{1,n}(U)}{h} \right)K_{v,h}\left(\frac{v-\zeta_{2,n}(V)}{h}\right)\mathbb{I}\{U\leq u,V\leq v\}\right]\\
&=& \mathbb{E}\left[\int_{-\infty}^\frac{u-\zeta_{1,n}(U)}{h}\int_{-\infty}^\frac{v-\zeta_{2,n}(V)}{h}k_{u,h}(s)k_{v,h}(t)\mathbb{I}\{U\leq u,V\leq v\}dsdt\right]\\
&=& \mathbb{E}\left[ \int_{-1}^{1}\int_{-1}^{1}k_{u,h}(s)k_{v,h}(t)\mathbb{I}\{s\leq \frac{u-\zeta_{1,n}(U)}{h},t\leq \frac{u-\zeta_{2,n}(V)}{h}\} \mathbb{I}\{U\leq u,V\leq v\}dsdt\right]\\
&=& \mathbb{E}\left[ \int_{-1}^{1}\int_{-1}^{1}k_{u,h}(s)k_{v,h}(t)\mathbb{I}\{U\leq \min(u,\zeta_{1,n}^{-1}(u-sh)),V\leq \min(v,\zeta_{2,n}^{-1}(v-th))\}dsdt \right]\\
&=& \int_{-1}^{1}\int_{-1}^{1}k_{u,h}(s)k_{v,h}(t)C[\min(u,\zeta_{1,n}^{-1}(u-sh)),\min(v,\zeta_{2,n}^{-1}(v-th))]dsdt.
\end{eqnarray*}

\noindent Since because $K_{u,h}(\cdot)$  takes its values in [0,1] as a distribution function, we observing that $K_{u,h}^2(x)\leq K_{u,h}(x)$. Then we can write
\begin{eqnarray*}
A & = &  \mathbb{E}\left[K_{u,h}^2\left(\frac{u-\zeta_{1,n}(U)}{h} \right)K_{v,h}^2\left(\frac{v-\zeta_{2,n}(V)}{h}\right)\right]\\
&\leq & \mathbb{E}\left[K_{u,h}\left(\frac{u-\zeta_{1,n}(U)}{h} \right)K_{v,h}\left(\frac{v-\zeta_{2,n}(V)}{h}\right)\right]\\
& \leq &\mathbb{E}\left[ \int_{-1}^{1}\int_{-1}^{1}k_{u,h}(s)k_{v,h}(t)\mathbb{I}\{U\leq \zeta_{1,n}^{-1}(u-sh),V\leq \zeta_{2,n}^{-1}(v-th)\}dsdt \right]\\
&\leq & \int_{-1}^{1}\int_{-1}^{1}k_{u,h}(s)k_{v,h}(t)C[\zeta_{1,n}^{-1}(u-sh),\zeta_{2,n}^{-1}(v-th)]dsdt.
\end{eqnarray*}
We can also notice that $$ C(u,v)=\int_{-1}^{1}\int_{-1}^{1}k_{u,h}(s)k_{v,h}(t)C(u,v)dsdt.$$ Thus
\begin{eqnarray*}
\mathbb{E}g^2(U,V,h)&\leq &\int_{-1}^{1}\int_{-1}^{1}k_{u,h}(s)k_{v,h}(t)C[\zeta_{1,n}^{-1}(u-sh),\zeta_{2,n}^{-1}(v-th)]dsdt\\
 & & -2\int_{-1}^{1}\int_{-1}^{1}k_{u,h}(s)k_{v,h}(t)C[\min(u,\zeta_{1,n}^{-1}(u-sh)),\min(v,\zeta_{2,n}^{-1}(v-th))]dsdt\\
 &  & +\int_{-1}^{1}\int_{-1}^{1}k_{u,h}(s)k_{v,h}(t)C(u,v)dsdt
\end{eqnarray*}
For $n$ enough large, we have by continuity of $F$ and $G$,
$$
\zeta_{1,n}^{-1}(u-sh)= Fo\hat{F}_n^{-1}(u-sh)\sim u-sh
$$ and
$$
\zeta_{2,n}^{-1}(v-th)= Go\hat{G}_n^{-1}(v-th)\sim v-th.
$$
By splitting the integrals, we obtain after simple calculus that
\begin{eqnarray*}
\mathbb{E}g^2(U,V,h)&\leq &\int_{-1}^{0}\int_{-1}^{0}k_{u,h}(s)k_{v,h}(t)[C(u-sh,v-th)-C(u,v)]dsdt\\
& & + \int_{0}^{1}\int_{0}^{1}k_{u,h}(s)k_{v,h}(t)[C(u-sh,v-th)-C(u,v)]dsdt\\
 & & \int_{-1}^{0}\int_{0}^{1}k_{u,h}(s)k_{v,h}(t)[C(u-sh,v-th)-C(u,v-th)]dsdt \\
 & &+\int_{-1}^{0}\int_{0}^{1}k_{u,h}(s)k_{v,h}(t)[C(u,v)-C(u,v-th)]dsdt\\
 & & +\int_{0}^{1}\int_{-1}^{0}k_{u,h}(s)k_{v,h}(t)[C(u-sh,v-th)-C(u-sh,v)]dsdt\\
& & +\int_{0}^{1}\int_{-1}^{0}k_{u,h}(s)k_{v,h}(t)[C(u,v)-C(u-sh,v)]dsdt\\
&=: & T_1 + T_2+T_3+T_4+ T_5+ T_6.\\ 
\end{eqnarray*}
 All these six terms can be bounded up by applying Taylor expansion. Precisely, we have
 \begin{eqnarray*}
 |T_1|& \leq & h(\Vert C_u(u,v)\Vert +\Vert C_v(u,v)\Vert) \\
 |T_2|&\leq & h(\Vert C_u(u,v)\Vert +\Vert C_v(u,v)\Vert)\\
 |T_3|& \leq & h(\Vert C_u(u,v)\Vert)\\
  |T_4| &\leq & h(\Vert C_v(u,v)\Vert)\\
  |T_5|& \leq & h(\Vert C_v(u,v)\Vert)\\
  |T_6|& \leq  & h(\Vert C_u(u,v)\Vert).\\
 \end{eqnarray*}
 From this, we can conclude that 
$$
\mathbb{E}g^2(U,V,h)\leq h 4(\Vert C_u(u,v)\Vert +\Vert C_v(u,v)\Vert),
$$ 
and $$\sup_{g\in \mathcal{G}}\mathbb{E}\left[g^2\left(U,V,h\right)\right]\leq C'h,$$ 
with $C'=4(\Vert C_u(u,v)\Vert +\Vert C_v(u,v)\Vert).$ \\

\textbf{Checking for (F.i)}. We have to check that $\mathcal{G}$ satisfies the uniform entropy condition.\\
Consider the following classes of functions :\\
\noindent
$ \mathbb{F}=\left\{\lambda x+m\ ;\lambda\geq 1,\;m\in\mathbb{R}\right\}$\\
$\displaystyle \mathbb{K}_0=\left\{K(\lambda x+m)\ ;\lambda\geq1,\;m\in\mathbb{R}\right\}$\\
$\displaystyle \mathbb{K}=\left\{K(\lambda x+m)K(\lambda y+m)\ ;\lambda\geq 1,\;m\in\mathbb{R}\right\}$\\
$\displaystyle \mathbb{H}=\left\{K(\lambda x+m)K(\lambda y+m)-\mathbb{I}\left\{x\leq u,y\leq v\right\}\ ;\lambda\geq 1 \;m\in\mathbb{R}, (u,v)\in [0,1]^2\right\}$.\\

It is clear that by applying the lemmas 2.6.15 and 2.6.18 in van der Vaart and Wellner (see \cite{r20}, p. 146-147), the sets $\mathbb{F},\;\mathbb{K}_0,\;\mathbb{K},\;\mathbb{H}$ are all VC-subgraph classes. Thus, by taking  the function $(x,y)\mapsto G(x,y)=\left\|k\right\|^2+1 $ as a measurable envelope function for $\mathbb{H}$ ( indeed $G(x,y)\geq \sup_{g\in\mathbb{H}}\left|g(x,y)\right|,\ \forall (x,y))$, we can infer from Theorem 2.6.7 in \cite{r20} that  $\mathbb{H}$ satisfies the uniform entropy condition. Since $\mathbb{H}$ and $\mathcal{G}$ have the same structure, we can conclude that $\mathcal{G}$ satisfies this property too. That is,

$$\exists \ \ C_0>0, \nu_0>0\ :\ N\left(\epsilon,\mathcal{G}\right)\leq C_0\epsilon^{-\nu_0},\quad 0<\epsilon<1.$$

\noindent \textbf{Checking for (F.ii).}\\

\noindent Define the class of functions 
$$
\mathcal{G}_0=\left\{K_{u,h}\left(\frac{u-\zeta_1(s)}{h} \right)K_{v,h}\left(\frac{v-\zeta_2(t)}{h}\right)- \mathbb{I}\{s\leq u,t\leq v\}; u,v\in\left([0,1]\cap\mathbb{Q}\right)^2; 0<h<1  \right\}.
$$
It's clear that  $\mathcal{G}_0$ is countable and  $\mathcal{G}_0\subset\mathcal{G}$. Let
$$g(x,y)=K_{u,h}\left(\frac{u-\zeta_1(x)}{h} \right)K_{v,h}\left(\frac{v-\zeta_2(y)}{h}\right)- \mathbb{I}\{x\leq u,y\leq v\}\in\mathcal{G}, (x,y)\in [0,1]^2$$
and, for $m\geq 1$, 
$$g_m(x,y)=K_{u_m,h}\left(\frac{u_m-\zeta_1(x)}{h} \right)K_{v_m,h}\left(\frac{v_m-\zeta_2(y)}{h}\right)- \mathbb{I}\{x\leq u_m,y\leq v_m\},$$
where $ u_m=\frac{1}{m^2}[m^2u]+\frac{1}{m^2}$ and $ v_m=\frac{1}{m^2}[m^2v]+\frac{1}{m^2}$.\\

Let $\alpha_m=u_m-u,\quad \beta_m=v_m-v$ and define
$$\delta_{m,u}=\left(\frac{u_m-\zeta_1(x)}{h} \right)-\left(\frac{u-\zeta_1(x)}{h} \right)=\frac{u_m-u}{h}=\frac{\alpha_m}{h} $$ and
$$\delta_{m,v}=\left(\frac{v_m-\zeta_2(y)}{h} \right)-\left(\frac{v-\zeta_2(y)}{h} \right)=\frac{v_m-v}{h}=\frac{\beta_m}{h}.$$
Then, one can easily see that  $\displaystyle 0<\alpha_m\leq\frac{1}{m^2}$ and $\displaystyle 0<\beta_m\leq\frac{1}{m^2}$.\\
This implies, for all large  $m$, that\
$\delta_{m,u}\searrow 0$ and $\delta_{m,v}\searrow 0$, 
which are equivalent to \\$ \left(\frac{u_m-\zeta_1(x)}{h} \right)\searrow \left(\frac{u-\zeta_1(x)}{h} \right)$ and $\left(\frac{v_m-\zeta_2(y)}{h} \right)\searrow \left(\frac{v-\zeta_2(y)}{h} \right).$\\
By right-continuity of $K_{w,h}$, we obtain
$$\forall (x,y)\in[0,1]^2, g_m(x,y)\longrightarrow g(x,y), m\rightarrow \infty$$
and conclude that $\mathcal{G}$ is pointwise measurable class.

\end{proof}

\appendix{\textbf{\Large{Appendix B}: Numerical computations}}

\begin{table}[ht]
\centering
\begin{tabular}{|c|c|c|c|c|}
\hline 
  & $(u,v)$ & $\alpha_n(u,v)$ & $C(u,v)$ &  $\beta_n(u,v)$ \\
\hline
 & (0.58,0.12) & -0.320 & 0.097 & 0.701  \\
 &(0.54,0.47) & -0.139 & 0.302 & 0.887  \\
 &(0.07,0.50) & -0.455 & 0.056 & 0.857  \\
 &(0.21,0.59) & -0.355 & 0.162 & 0.672  \\
 $\theta$ =0.5 &(0.18,0.42) & -0.441 & 0.118 & 0.585  \\
 &(0.42,0.52) & -0.204 & 0.268 & 1.032  \\
 &(0.73,0.61) &  0.005 & 0.475 & 0.887  \\
 &(0.07,0.96) & -0.271 & 0.069 & 0.755   \\
 &(0.69,0.85) &  0.090 & 0.602 & 1.118   \\
 &(0.28,0.72) & -0.298 & 0.233 & 0.729  \\
\hline 
 &(0.96,0.29) &  0.081 & 0.289 & 1.100  \\
 &(0.16,0.45) & -0.362 & 0.152 & 0.664 \\
 &(0.67,0.75) &  0.066 & 0.576 & 1.093  \\
 &(0.22,0.47) & -0.320 & 0.203 & 0.706  \\
 $\theta=2$ &(0.12,0.55) & -0.240 & 0.118 & 0.786  \\
 &(0.85,0.80) &  0.082 & 0.716 & 1.109  \\
 &(0.46,0.42) & -0.174 & 0.326 & 0.852  \\
 &(0.83,0.22) & -0.327 & 0.217 & 0.700   \\
 &(0.65,0.26) & -0.223 & 0.248 & 0.804   \\
 &(0.11,0.30) & -0.251 & 0.103 & 0.776  \\
\hline
 &(0.32,0.40) & -0.163 & 0.307 & 0.863  \\
 &(0.85,0.65) &  0.073 & 0.638 & 1.100  \\
 &(0.31,0.70) & -0.143 & 0.309 & 0.884  \\
 &(0.51,0.60) &  0.015 & 0.484 & 1.012  \\
 $\theta$=6 &(0.89,0.14) &  0.139 & 0.139 & 1.267  \\
 &(0.80,0.66) &  0.131 & 0.637 & 1.158  \\
 &(0.24,0.87) & -0.244 & 0.239 & 0.782  \\
 &(0.10,0.13) & -0.283 & 0.096 & 0.744   \\
 &(0.31,0.08) & -0.333 & 0.079 & 0.694   \\
 &(0.65,0.53) &  0.009 & 0.509 & 1.018  \\
\hline 
\end{tabular}
\caption{Confidence bands for Clayton copula calculated for some random couples of values $(u,v)$.}
\label{tab3}
\end{table}

\begin{table}[ht]
\centering
\begin{tabular}{|c|c|c|c|c|}
\hline
 & $(u,v)$ & $\alpha_n(u,v)$ & $C(u,v)$ &  $\beta_n(u,v)$ \\
\hline 
&(0.48,0.25) & -0.398 & 0.075 & 0.628  \\
& (0.12,0.80) & -0.338 & 0.077 & 0.689  \\
& (0.35,0.68) & -0.337 & 0.189 & 0.688  \\
& (0.73,0.21) & -0.317 & 0.119 & 0.710  \\
$\theta=-2$ & (0.74,0.44) &  0.203 & 0.279 & 0.823  \\
& (0.14,0.67) & -0.131 & 0.066 & 1.158  \\
& (0.56,0.81) &  0.070 & 0.418 & 0.950  \\
& (0.98,0.21) &  0.077 & 0.202 & 1.036   \\
& (0.29,0.25) & -0.461 & 0.038 & 0.566   \\
& (0.77,0.22) & -0.316 & 0.137 & 0.714 \\
\hline 
& (0.47,0.38) & -0.294 & 0.297 & 0.733  \\
& (0.13,0.86) & -0.398 & 0.128 & 0.628  \\
& (0.68,0.02) & -0.478 & 0.019 & 0.548  \\
$\theta=5$ & (0.34,0.20) & -0.373 & 0.146 & 0.654  \\
& (0.41,0.47) & -0.254 & 0.315 & 0.773  \\
& (0.53,0.24) & -0.354 & 0.212 & 0.673  \\
& (0.38,0.33) & -0.343 & 0.235 & 0.683  \\
& (0.31,0.60) &  0.233 & 0.280 & 0.793   \\
& (0.21,0.97) &  0.032 & 0.209 & 0.994   \\
& (0.07,0.48) & -0.356 & 0.063 & 0.670 \\
\hline 
& (0.87,0.45) & -0.234 & 0.449 & 0.793  \\
& (0.78,0.44) & -0.240 & 0.439 & 0.786  \\
& (0.60,0.72) &  0.044 & 0.593 & 0.983  \\
& (0.43,0.57) & -0.214 & 0.425 & 0.812  \\
$\theta=18$ & (0.26,0.33) & -0.298 & 0.246 & 0.729  \\
& (0.10,0.90) &  0.052 & 0.100 & 0.975  \\
& (0.86,0.40) & -0.285 & 0.399 & 0.741  \\
& (0.46,0.62) &  0.194 & 0.456 & 0.833   \\
& (0.05,0.52) & -0.449 & 0.049 & 0.578   \\
& (0.45,0.04) & -0.416 & 0.039 & 0.611 \\
\hline 
\end{tabular}
\caption{Confidence bands for Frank copula calculated for some random couples of values $(u,v)$ .}
\label{tab4}
\end{table}

\newpage

\addcontentsline{toc}{section}{References}

\end{document}